\documentclass[11pt,reqno]{article}
\usepackage[margin=1in]{geometry}
\usepackage{authblk}

\usepackage[english]{babel}
\usepackage[dvipsnames]{xcolor}
\usepackage{amsmath, amsthm, amssymb}
\usepackage{array, booktabs}
\usepackage{graphicx}
\usepackage{mathtools}
\usepackage{microtype}
\usepackage{caption}
\usepackage{float}
\usepackage{todonotes}
\usepackage[title]{appendix}

\usepackage[
  colorlinks=true,
  linkcolor=ForestGreen,
  citecolor=NavyBlue,
  filecolor=Plum,
  urlcolor=Plum,
  backref=page
]{hyperref}
\usepackage{fourier}
\usepackage{algorithm, algpseudocode, algorithmicx}

\algrenewcommand\alglinenumber[1]{\sf\scriptsize\color{NavyBlue}{#1}}
\usepackage{backref}
\renewcommand*{\backref}[1]{}
\renewcommand*{\backrefalt}[4]{%
	\ifcase #1 %
	(No citations.)
	\or
	(Cited p.~#2.)
	\else
	(Cited pp.~#2.)
	\fi
}
\DeclarePairedDelimiter{\fronorm}{\lVert}{\rVert_{\rm F}}
\usepackage[capitalise,nameinlink]{cleveref}
\crefname{equation}{}{}
\numberwithin{equation}{section}
\newcommand{\phimax}{\phi_{\max}}
\newcommand{\Bbad}{\widetilde{\bB}}
\newcommand{\Id}{\bI}

\newcommand{\ones}{\mathbf{1}}
\newcommand{\zeros}{\mathbf{0}}

\DeclareMathOperator{\E}{\mathbb{E}}

\newcommand{\R}{\mathbb{R}}

\newcommand{\bv}[1]{\mathbf{#1}}

\newcommand{\bw}{\mathbf{w}}
\newcommand{\bx}{\mathbf{x}}

\newcommand{\bA}{\mathbf{A}}
\newcommand{\bB}{\mathbf{B}}

\newcommand{\bI}{\mathbf{I}}

\newcommand{\bN}{\mathbf{N}}
\newcommand{\bP}{\mathbf{P}}

\newcommand{\bR}{\mathbf{R}}
\newcommand{\bS}{\mathbf{S}}

\newcommand{\bX}{\mathbf{X}}
\newcommand{\bY}{\mathbf{Y}}

\usepackage{thmtools}
\declaretheorem[name=Theorem]{theorem}
\declaretheorem[name=Fact,numberlike=theorem]{fact}

\declaretheorem[name=Lemma,numberlike=theorem]{lemma}

\declaretheorem[name=Proposition,numberlike=theorem]{proposition}

\theoremstyle{definition}
\declaretheorem[name=Definition,numberlike=theorem]{definition}
\newtheorem*{definition*}{Definition}

\theoremstyle{remark}

\DeclarePairedDelimiter{\norm}{\lVert}{\rVert}

\DeclareMathOperator*{\tr}{tr}

\DeclareMathOperator*{\Binomial}{Binomial}
\DeclareMathOperator*{\Poisson}{Poisson}

\makeatletter
\newcommand\fs@paperruled{%
  \def\@fs@pre{\hrule height 0.4pt \kern 4pt}%
  \def\@fs@mid{\kern 4pt \hrule height 0.4pt \kern 4pt}%
  \def\@fs@post{\kern 4pt \hrule height 0.4pt \relax}%
  \let\@fs@capt\floatc@ruled
  \def\@fs@cfont{\bfseries}%
  \let\@fs@iftopcapt\iftrue
}
\makeatother

\newcolumntype{L}[1]{>{\raggedright\arraybackslash}p{#1}}
\newcolumntype{C}[1]{>{\centering\arraybackslash}p{#1}}
  \usepackage{nth}
  \usepackage{intcalc}

\title{\vspace{-1.5em}Fast Length-Squared Sampling for Positive-Semidefinite Matrices}
\author[1]{Rajarshi Bhattacharjee}
\author[2]{Ethan N. Epperly}
\author[1]{Cameron Musco}
\author[1]{Aaron Tian}
\affil[1]{University of Massachusetts Amherst}
\affil[2]{University of California Berkeley}

\date{}

\begin{document}
\maketitle
\begin{abstract}
We describe a simple rejection-sampling-based algorithm to perform \textit{length-squared sampling} on an $n \times n$ positive-semidefinite (psd) matrix: that is, to sample a column with probability proportional to its squared \(\ell_2\)-norm. The algorithm runs in just $O(n)$ expected time, which is significantly sublinear in the input matrix size.
The runtime is optimal, even when the input is assumed to be diagonal. 

Our result has several applications.
Length-squared sampling is used by a number of sublinear time algorithms for matrix problems, like low-rank approximation and eigenvalue approximation. Often, it is assumed that the algorithm is given access to the matrix column norms, and thus can perform length-squared sampling efficiently. Our result shows that, at least for psd matrices, we can remove this assumption. We also discuss an application to an asymptotically optimal algorithm for estimating the Frobenius norm of a psd matrix to relative error. 
Finally, we show that our sampling algorithm yields a very simple sublinear time algorithm for the \emph{robust psd low-rank approximation} problem introduced in \cite{bakshi_robust_2020}, which nearly matches the more complex method developed there. 
\end{abstract}

\section{Introduction}
\label{sec:intro}

In modern applications, we frequently encounter matrices of massive scale that would be intractable to store and compute with directly. A few examples include Hessians in neural network optimization \cite{martens2010deep}, interaction kernels in particle simulations \cite{greengard1987fast}, massive Kronecker-structured operators in quantum physics and other scientific applications \cite{oseledets2011tensor,schollwock2011density}, and \emph{kernel matrices} in machine learning and spatial statistics \cite{SS02,rasmussen2003gaussian}. 
In this last example, given data points \(\bx_1,\ldots,\bx_n \in \R^d\) and a positive definite kernel function \(k : \R^d\times \R^d \to \R\), one must perform computations involving the kernel matrix
$
\bA \in \R^{n\times n}$ with entries $\bA_{ij} = k(\bx_i,\bx_j).$
In modern settings, when $n$ can be in the billions \cite{MCRR20}, even forming and storing all $n^2$ entries of $\bA \in \R^{n\times n}$, let alone performing linear algebraic manipulations on it, is infeasible. 

Fortunately, it is often relatively cheap to access individual matrix entries \(\bA_{ij}\) -- e.g., in the kernel setting, this typically requires just one $O(d)$ time kernel function computation. This opens the possibility of processing large matrices using \emph{sublinear time algorithms} that access only a small fraction of the input matrix entries (i.e., use $o(n^2)$ total accesses).

There is a large body of research on sublinear matrix algorithms -- see, e.g., \cite{williams_using_2000,frieze_fast_2004,drineas_nystrom_2005,deshpande_matrix_2006,drineas_fast_2006,drineas_fast_2006-1,SV09,CD13,musco_recursive_2017,musco_sublinear_2019,bakshi_robust_2020,chen_randomly_2025,CY25,GW26}.
A frequently used primitive is \emph{length-squared sampling}, in which a row or column of the input matrix is sampled with probability proportional to its squared \(\ell_2\) norm.
This primitive is used, e.g., in classic algorithms for approximate matrix multiplication  \cite{drineas_fast_2006} and low-rank approximation \cite{frieze_fast_2004,drineas_fast_2006-1}. It is also used by the randomized Kaczmarz method for solving linear systems \cite{SV09,needell_stochastic_2015} and in sublinear time algorithms for eigenvalue approximation \cite{bhattacharjee_sublinear_2022,swartworth_tight_2024}, distance matrix approximation \cite{bakshi2018sublinear}, and more.
Unfortunately, length-squared sampling is not always available in the sublinear setting: computing even one column norm requires reading \(n\) entries, and computing the full sampling distribution requires reading the entire matrix.
Thus, to employ length-squared sampling, we either need to have the row or column norms provided to us, or we must exploit additional structure in the matrix.

\subsection{Our Results}

Our main contribution is to show that there is an extremely simple and efficient sublinear time algorithm for length-squared sampling on any positive semidefinite (psd) matrix $\bA$. Formally, we prove:

\begin{theorem}[Sublinear time length-squared sampling]
\label{thm:intro-fast-l2-sampling}
Let \(\bA\in\R^{n\times n}\) be nonzero and psd.
There exists an algorithm (\cref{alg:sampling-procedure}) that outputs a random index \(I\in[n]\) sampled from the length-squared distribution
\begin{equation} \label{eq:length-squared}
    \Pr[I = i]
    =
    \frac{\|\bA_{*,i}\|_2^2}{\fronorm{\bA}^2}\qquad\text{for all $i\in[n]$}.
\end{equation}
The algorithm reads \(O(n)\) entries of $\bv A$ and performs $O(n)$ arithmetic operations in expectation.
\end{theorem}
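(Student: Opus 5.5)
Our plan is to use rejection sampling with a proposal distribution that can be computed from the diagonal alone. The key psd fact is $\bA_{ij}^2 \le \bA_{ii}\bA_{jj}$, since every $2\times 2$ principal minor is nonnegative. Summing over $j$ gives
\[
\|\bA_{*,i}\|_2^2 = \sum_{j} \bA_{ij}^2 \le \bA_{ii}\tr(\bA).
\]
So the proposal is $q_i = \bA_{ii}/\tr(\bA)$, which needs only the $n$ diagonal entries and $O(n)$ arithmetic. The target is $p_i = \|\bA_{*,i}\|_2^2/\fronorm{\bA}^2$. The inequality above gives $p_i \le q_i\,\tr(\bA)^2/\fronorm{\bA}^2$.

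Each round of the algorithm does the following:
\begin{itemize}
\item Read the diagonal once and build a sampler for $q$, for instance by prefix sums and binary search, or an alias table.
\item Draw $i\sim q$ and read column $i$ at a cost of $n$ entries.
\item Accept with probability $\|\bA_{*,i}\|_2^2/(\bA_{ii}\tr(\bA))$, which lies in $[0,1]$ by the inequality. Otherwise repeat.
\end{itemize}
Indices with $\bA_{ii}=0$ have a zero column, because $\bA_{ij}^2\le 0$, so they are never needed. Standard rejection-sampling reasoning gives the exact output law: the probability of proposing $i$ and accepting in one round is
\[
q_i\cdot\frac{\|\bA_{*,i}\|_2^2}{\bA_{ii}\tr(\bA)} = \frac{\|\bA_{*,i}\|_2^2}{\tr(\bA)^2}.
\]
This is proportional to $p_i$. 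The per-round acceptance probability is therefore $\alpha=\fronorm{\bA}^2/\tr(\bA)^2$, and the output has exactly the distribution \eqref{eq:length-squared}.

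The main obstacle is the running time. The expected number of rounds is $1/\alpha = \tr(\bA)^2/\fronorm{\bA}^2$, which is the stable-rank-like quantity $(\sum\lambda_k)^2/\sum\lambda_k^2$. This can be as large as $n$, for example when $\bA=\bI$. In that case the naive scheme costs $\Theta(n^2)$, so some saving is required.

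The fix I would try first is to make rejection cheap rather than rare. Instead of reading the whole column, we estimate the acceptance ratio by reading few entries. Draw $j\sim q$ as well and accept with probability $\bA_{ij}^2/(\bA_{ii}\bA_{jj})\in[0,1]$. Then the probability of proposing $i$ and accepting is
\[
\sum_j \frac{\bA_{ii}\bA_{jj}}{\tr(\bA)^2}\cdot\frac{\bA_{ij}^2}{\bA_{ii}\bA_{jj}} = \frac{\|\bA_{*,i}\|_2^2}{\tr(\bA)^2}.
\]
This is again exactly proportional to $p_i$. Each round now costs $O(1)$ entry reads and $O(1)$ time after the one-time $O(n)$ preprocessing.

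The expected number of rounds is still $\tr(\bA)^2/\fronorm{\bA}^2$, so we need this ratio to be $O(n)$. That follows from Cauchy--Schwarz on the eigenvalues: $(\sum_k\lambda_k)^2\le n\sum_k\lambda_k^2$. The total expected cost is then $O(n)$ for preprocessing plus $O(n)$ expected rounds at $O(1)$ each, which gives $O(n)$ entry reads and arithmetic operations in expectation. Correctness follows from the geometric-trials argument above.
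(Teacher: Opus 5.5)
Your final scheme (draw $i,j$ independently from $q_i=\bA_{ii}/\tr(\bA)$, accept $i$ with probability $\bA_{ij}^2/(\bA_{ii}\bA_{jj})$, and bound the expected number of rounds $\tr(\bA)^2/\fronorm{\bA}^2$ by $n$) is exactly the paper's \cref{alg:sampling-procedure} with the analysis of \cref{thm:fast-l2-sampling}, so your proposal is correct and takes essentially the same approach. The differences are cosmetic: you apply Cauchy--Schwarz to the eigenvalues, while the paper uses $\fronorm{\bA}^2\ge\sum_i\bA_{ii}^2$ and applies Cauchy--Schwarz to the diagonal; and your $O(1)$-per-round cost requires the alias-table option, since prefix sums with binary search would cost $O(\log n)$ per draw.
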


\cref{alg:sampling-procedure} uses a simple rejection sampling scheme where two indices $I$ and $J$ are sampled from the diagonal sampling distribution $\mathcal D(i) = \bA_{ii} / \tr(\bA)$ and a proposal $(I,J)$ is accepted with probability $\bA_{IJ}^2 / (\bA_{II} \bA_{JJ})$. This is a valid probability when $\bv A$ is psd -- see \Cref{fact:psd-inequality}.
Conditional on acceptance, $(I,J)$ is drawn from squared-entry distribution $\mathcal{E}(i,j) = \bA_{ij}^2 / \fronorm{\bA}^2$ and the second index $J$ thus follows the length-squared distribution \cref{eq:length-squared}. In fact, by symmetry, both indices follow this distribution. The key proof step is to show that the acceptance probability is $\Omega(1/n)$, and so the algorithm runs in $O(n)$ expected time. 

For a general matrix, performing length-squared sampling requires $\Omega(n^2)$ entry accesses: consider $\bA$ with a single nonzero entry in a random position. Thus, to achieve $O(n)$ runtime, our algorithm must exploit psd structure. Beyond being a mathematically natural class, psd matrices appear frequently in applications.
Indeed, the kernel matrix associated with any positive definite kernel function $k$ and any data set $\{ \bx_i \}$ is always psd.

\smallskip

\noindent \textbf{Optimality.}
It is not hard to see that \Cref{thm:intro-fast-l2-sampling} is asymptotically optimal: the class of psd matrices includes all positive diagonal matrices. To perform length-squared sampling from such a matrix, we must at least read the full diagonal, requiring $\Omega(n)$ time. In many applications, one requires multiple indices sampled independently from the length-squared distribution. Simply repeating our algorithm $m$ times samples $m$ such indices in time $O(mn)$. In \Cref{thm:optimality}, we show that this is also optimal -- i.e., one cannot sample many indices in $o(n)$ time per index on average.

\smallskip

\noindent\textbf{Frobenius Norm Estimation.}
Using similar ideas to our sublinear-time length-squared sampling algorithm, we also present an algorithm for computing a $(1\pm \epsilon)$-relative error approximation to the Frobenius norm of a psd matrix in $O(n/\epsilon^2)$ time. Formally, we prove:

\begin{theorem}[Sublinear time Frobenius norm approximation]
\label{thm:frobenius-norm-intro}
Let \(\bA\in\R^{n\times n}\) be a psd matrix. For any \(\epsilon,\delta\in(0,1)\) there exists an algorithm (\Cref{alg:frobenius-norm-approximation}) that outputs an estimate \(\widehat{F}\) satisfying
\[
    (1-\epsilon)\fronorm{\bA}^2
    \leq
    \widehat{F}
    \leq
    (1+\epsilon)\fronorm{\bA}^2 \quad \text{with probability } \ge 1-\delta.
\]
The algorithm reads \(O(n\log(1/\delta)/\epsilon^2)\) entries of $\bv A$ and performs \(O(n\log(1/\delta)/\epsilon^2)\) arithmetic operations.
\end{theorem}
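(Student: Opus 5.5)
We build an unbiased estimator of $\fronorm{\bA}^2$ from the same diagonal proposal distribution that drives \cref{alg:sampling-procedure}, and bound its variance. Compute the diagonal exactly ($n$ reads) and set $T=\tr(\bA)$. If $T=0$ then $\bA=\zeros$ (psd), so we output $\widehat F=0$. Otherwise draw independent pairs $(I,J)$ with $I,J$ i.i.d.\ from $\mathcal D(i)=\bA_{ii}/T$. For each pair form
\[
X=\frac{\bA_{IJ}^2}{\bA_{II}\bA_{JJ}}\,T^2 .
\]
Then $\E[X]=\sum_{i,j}\frac{\bA_{ii}\bA_{jj}}{T^2}\cdot\frac{\bA_{ij}^2 T^2}{\bA_{ii}\bA_{jj}}=\fronorm{\bA}^2$. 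Pairs with a zero diagonal entry are never drawn, and psd forces $\bA_{ij}=0$ whenever $\bA_{ii}=0$, so no mass is lost. By \Cref{fact:psd-inequality}, $0\le X\le T^2$.

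The variance bound rests on $\E[X^2]\le T^2\,\E[X]=T^2\fronorm{\bA}^2$. The key inequality is
\[
\tr(\bA)^2\le \operatorname{rank}(\bA)\,\fronorm{\bA}^2\le n\fronorm{\bA}^2 ,
\]
which is Cauchy--Schwarz on the nonnegative eigenvalues. This presumably is also the $\Omega(1/n)$ acceptance bound proved for \cref{thm:intro-fast-l2-sampling}, since the acceptance probability there equals exactly $\fronorm{\bA}^2/T^2$. It gives
\[
\operatorname{Var}(X)\le \E[X^2]\le n\fronorm{\bA}^4 .
\]
Averaging $m=O(n/\epsilon^2)$ independent copies brings the variance down to $\epsilon^2\fronorm{\bA}^4/C$. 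Chebyshev then yields relative error $\epsilon$ with probability at least $3/4$.

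To reach failure probability $\delta$, use median-of-means. Take $O(\log(1/\delta))$ independent such averages and output their median; a Chernoff bound on the number of failing groups finishes the argument. Each sample reads $3$ entries, namely $\bA_{IJ}$ plus the two already-known diagonal entries, and uses $O(1)$ arithmetic. Sampling from $\mathcal D$ costs $O(1)$ after $O(n)$ preprocessing, via an alias table or cumulative sums with binary search. If binary search is used, its $\log n$ factor is avoided by the alias method. The totals are therefore $O(n+n\log(1/\delta)/\epsilon^2)$ reads and operations, matching the statement.

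The only real obstacle is the second-moment bound, i.e.\ showing that $T^2/\fronorm{\bA}^2\le n$. Once the eigenvalue Cauchy--Schwarz step is in place, the rest is routine concentration. An alternative I would keep in reserve is to view $\widehat F/T^2$ as an estimate of a Bernoulli acceptance rate $p\ge 1/n$ and apply a multiplicative Chernoff bound. This gives the same $O(n\log(1/\delta)/\epsilon^2)$ sample count.
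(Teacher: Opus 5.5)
Your proof is correct and shares its core with the paper's proof. The paper uses the same estimator: it averages $\bA_{IJ}^2/(\bA_{II}\bA_{JJ})\in[0,1]$ and multiplies by $\tr(\bA)^2$ at the end. It also uses the same boundedness via \cref{fact:psd-inequality} and the same key inequality $\tr(\bA)^2\le n\fronorm{\bA}^2$. The paper derives that inequality from Cauchy--Schwarz on the diagonal, $\tr(\bA)^2\le n\sum_i\bA_{ii}^2\le n\fronorm{\bA}^2$. Your eigenvalue version gives the slightly sharper factor $\operatorname{rank}(\bA)$.

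The two proofs differ in the concentration step. The paper applies a multiplicative Chernoff bound for i.i.d.\ $[0,1]$-valued variables directly to the plain average. This gives failure probability at most $2\exp\bigl(-s\epsilon^2\fronorm{\bA}^2/(3\tr(\bA)^2)\bigr)\le\delta$ for $s=\lceil 3n\log(2/\delta)/\epsilon^2\rceil$. That is exactly the route you kept in reserve, and it needs no Bernoulli coin, since the bound holds for $[0,1]$-valued variables.

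Your main route (second moment, Chebyshev, median-of-means) is valid. It would also survive if the estimator were only of bounded variance rather than bounded. However, it changes the output of \cref{alg:frobenius-norm-approximation} from a mean to a median of means. Chebyshev applied to the plain mean alone would need $O(n/(\epsilon^2\delta))$ samples. Since the theorem names \cref{alg:frobenius-norm-approximation} as written, only the Chernoff route proves that algorithm verbatim.

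Your observation $\E[X^2]\le \tr(\bA)^2\,\E[X]$ (variance at most range times mean) is exactly what lets Bernstein or multiplicative Chernoff give $\exp\bigl(-\Omega(s\epsilon^2\fronorm{\bA}^2/\tr(\bA)^2)\bigr)$ for the plain mean. So the median step is unnecessary here.

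Your alias-table remark makes explicit the $O(1)$ per-round sampling cost that the paper leaves implicit.
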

We show in \Cref{thm:frob_norm_optimality} that the complexity of \Cref{thm:frobenius-norm-intro} is also asymptotically optimal: any algorithm estimating the Frobenius norm of psd $\bv A$ to $(1\pm \epsilon)$-relative error with constant probability must read $\Omega(n/\epsilon^2)$ entries in expectation.

\smallskip

\noindent\textbf{$\ell_p$ Norm Sampling.} We additionally show in \Cref{thm:fast-lp-sampling} that a small modification of our algorithm can sample columns according to the $\ell_p$ column-norm distribution:
\[
    \Pr[I=i] = \frac{\norm{\bv A_{*,i}}_p^p}{\sum_{k=1}^n \norm{\bv A_{*,k}}_p^p}.
\]
The algorithm runs in $O(n)$ expected time for any $p \geq 0$. While not as fundamental as length-squared sampling, $\ell_p$ norm sampling is also commonly applied in randomized algorithms for linear algebra \cite{dasgupta2009sampling,munteanuturnstile}.

\subsection{Applications}

Using our fast length-squared sampling and Frobenius norm approximation algorithms, we derive several sublinear time algorithms for linear algebra problems involving psd matrices.

\begin{itemize}

    \item \textbf{Eigenvalue approximation.} We show that, for psd matrices, the sublinear time eigenvalue approximation algorithm of \cite{bhattacharjee_sublinear_2022,swartworth_tight_2024} can be implemented using entry queries alone, without assuming access to the matrix column norms as input.
    The resulting algorithm approximates all eigenvalues of a psd matrix $\bA$ to additive error \(\pm \epsilon\fronorm{\bA}\) using \(\widetilde O(n/\epsilon^2)\) entry queries in expectation.

    \item \textbf{Additive-error low-rank approximation.} 
    We describe a simple algorithm that outputs the factors  $\bv X,\bv Y \in \R^{n \times k}$ of a rank-$k$ approximation $\bX\bY^\top$  to a psd matrix $\bA$ satisfying the additive-error guarantee
    \[
        \fronorm{\bA - \bX\bY^\top}^2 \le \fronorm{\bA-\bA_k}^2+\epsilon\fronorm{\bA}^2,
    \]
    where $\bv A_k$ is the optimal rank-$k$ approximation, given by the singular value decomposition.
    Our algorithm uses \(O(nk/\epsilon^2)\) entry queries and \(O(nk^2/\epsilon^4)\) arithmetic operations in expectation.
    It is based on an algorithm of \cite{drineas_fast_2006-1} that assumes access to the column norms. It is much simpler than known sublinear algorithms for psd low-rank approximation with Frobenius norm error bounds \cite{musco_sublinear_2019,bakshi_robust_2020}.
    However, these prior algorithms either achieve somewhat improved sample complexity and/or stronger relative-error guarantees; see \cref{sec:additive-lra} for discussion.

    \item \textbf{Robust low-rank approximation.} 
    Finally, we extend the above result to  a robust version of the psd low-rank approximation problem introduced by \cite{bakshi_robust_2020}. Here, we are given query access to a corrupted matrix $\bB = \bA + \bN$ where $\bv A$ is psd and $\bv N$ is arbitrary noise with  $\fronorm{\bN}^2 \le \eta \fronorm{\bA}^2$.
    We provide a simple algorithm for producing the factors of a rank-$k$ approximation $\bX\bY^\top$ satisfying
    \begin{equation*}
        \fronorm{\bA - \bX\bY^\top}^2 \le \fronorm{\bA - \bA_k}^2 + (\epsilon + O(\sqrt{\eta})) \fronorm{\bA}^2.
    \end{equation*}
    The algorithm makes $O(\phimax^2 n k /\epsilon^2)$ queries in expectation, where $\phimax \ge \max\{ 1, \max_{i \in [n]} \bA_{ii} / |\bB_{ii}|\}$ is a given corruption parameter. Our error bound matches what is achieved by the algorithm of \cite{bakshi_robust_2020}. Further, our algorithm is significantly simpler, and removes an assumption required by \cite{bakshi_robust_2020} on the row norms of $\bv N$. Our runtime also matches \cite{bakshi_robust_2020} up to a suboptimal dependence on the accuracy parameter $\epsilon$.
\end{itemize}
Across these three applications, our goal is to demonstrate the applicability of our length-squared sampling algorithm in giving (relatively) simple sublinear algorithms for different linear algebraic tasks that make minimal additional assumptions on the input matrix.

\subsection{Related Work}

Length-squared sampling is one of the most widely used primitives in randomized matrix computations, dating back to Frieze, Kannan, and Vempala's seminal work on fast low-rank approximation \cite{frieze_fast_2004}.
Length-squared sampling has also been used in significant  follow-up work on low-rank approximation \cite{drineas_fast_2006-1,DKM06b,deshpande_matrix_2006,deshpande_adaptive_2006}, for approximate matrix multiplication \cite{drineas_fast_2006}, for the iterative solution of linear systems via the Kaczmarz method \cite{SV09}, and in work on sublinear time eigenvalue and eigenvector approximation \cite{bhattacharjee_sublinear_2022,swartworth_tight_2024,RB26}.
Length- and entry-squared sampling has also been a fundamental tool in dequantization and quantum-inspired algorithm design \cite{Tan19,CGL+22,chepurko2022quantum}.

Motivated by kernel-based machine learning and Gaussian process-based methods in statistics, there has been significant interest in developing sublinear time algorithms for psd matrices in particular.
This research spans theoretical computer science \cite{drineas_nystrom_2005,musco_recursive_2017,musco_sublinear_2019,bakshi_robust_2020,DM20,bhattacharjee_sublinear_2022,swartworth_tight_2024}, applied mathematics \cite{FL24,chen_randomly_2025,ETW25}, and machine learning  \cite{FS01,rahimi2007random,le2013fastfood,Bac13a,RCR15,RCCR19}. Existing approaches employ techniques such as uniform sampling \cite{williams_using_2000}, diagonal and diagonal-power sampling \cite{drineas_nystrom_2005,chen_randomly_2025}, ridge leverage score approximation \cite{musco_recursive_2017,RCCR19,musco_sublinear_2019,bakshi_robust_2020}, determinantal sampling \cite{DM20}, and random featured-based methods \cite{rahimi2007random}.
Most closely related to our approach, several papers on sublinear time low-rank approximation of psd and distance matrices use random sampling approaches that either explicitly or implicitly estimate the squared column norm distribution \cite{bakshi2018sublinear,indyk2019sample,bakshi_robust_2020}. However, no prior work yields a sublinear time algorithm that actually samples from this distribution, as we do in \Cref{thm:intro-fast-l2-sampling}.

\subsection{Notation and Access Model}

Throughout, we let $[n]$ denote the set of indices $\{1,\ldots,n\}$.
For a matrix \(\bA\), we let \(\bA_{ij}\) denote the \((i,j)\)-entry, \(\bA_{i,*}\) denote the \(i\)th row, and \(\bA_{*,i}\) denote the \(i\)th column.
We write \(\fronorm{\bA}\) for the Frobenius norm, \(\tr(\bA) = \sum_{i \in [n]} \bv A_{ii} \) for the trace when $\bv A$ is square, and $\bA_k$ for the Eckart--Young best rank-$k$ approximation, given by projection onto $\bv A$'s top-$k$ singular vectors. $\bv A_k$ minimizes $\norm{\bv A - \bv A_k}_*$ over all rank-$k$ matrices, where $\norm{\cdot}_*$ is any unitarily invariant norm, including the Frobenius norm.
The $\ell_2$ norm of a vector $\bx \in \R^n$ is denoted $\norm{\bx}$.
A symmetric matrix \(\bA\in\R^{n\times n}\) is positive semidefinite (psd) if \(\bx^\top\bA\bx\geq 0\) for all \(\bx\in\R^n\).

Throughout, we use the entry query model where we interact with a matrix $\bA$ by querying individual entries. We report algorithm costs in terms of both the number of entry queries and the total number of arithmetic operations.

\section{Fast Length-Squared Sampling}
\label{sec:fast-sample}
We start by presenting our main algorithm (\Cref{alg:sampling-procedure}) for length-squared sampling from psd matrices, and bounding its runtime. As discussed, the algorithm in fact gives a bit more than length-squared sampling: it samples a pair of indices $(I,J)$ from the entry-squared distribution $\mathcal E(i,j) = \bA_{ij}^2 / \fronorm{\bA}^2$.
Focusing on either $I$ or $J$ in isolation, we obtain a sample from the length-squared distribution. 

To sample from $\mathcal E(i,j)$, the algorithm performs rejection sampling: it first samples $I$ and $J$ independently from the diagonal distribution $\mathcal D(i) = \bA_{ii} / \tr(\bA)$ (line 3 of \Cref{alg:sampling-procedure}).
That is, we have $(I,J) = (i,j)$  with probability $\bv A_{ii} \bv A_{jj}/\tr(\bv A)^2$. A proposal $(I,J)$ is then accepted with probability $\bA_{IJ}^2 / (\bA_{II} \bA_{JJ})$ (line 4 of \Cref{alg:sampling-procedure}). This ensures that the pair is sampled with probability proportional to $\bv A_{IJ}^2$, guaranteeing correctness of the algorithm. The key proof step is to argue that the acceptance probability is $\Omega(1/n)$, and thus the number of proposals required in expectation (and the runtime) is $O(n)$.  

\begin{algorithm}[H]
\caption{Fast entry-squared and length-squared sampling} \label{alg:sampling-procedure}
\begin{algorithmic}[1]
\Require Nonzero psd matrix $\bA \in \R^{n\times n}$.
\Ensure Index $(I,J)$ sampled from the entry-squared distribution $\mathcal E(i,j) = \bA_{ij}^2 / \fronorm{\bA}^2$. The indices $I$ and $J$ are sampled individually from the length-squared distribution $\mathcal{L}(i) = \norm{\bA_{i,*}}^2 / \fronorm{\bA}^2$.
\State Read the diagonal of \(\bA\), and define the distribution
\(
    \mathcal D(i)={\bA_{ii}}/{\tr(\bA)}
\)
for all $i\in[n].$
\While{\texttt{true}}
\State Sample independent $I,J \sim \mathcal D$.
\State {With probability} $\bA_{IJ}^2 / (\bA_{II}^{\vphantom{2}} \bA_{JJ}^{\vphantom{2}})$ \textbf{return} $I, J$.
\EndWhile
\end{algorithmic}
\end{algorithm}
\vspace{-.5em}
For the acceptance probability in line 4 of \Cref{alg:sampling-procedure} to even be a valid probability in $[0,1]$, we need the following \emph{off-diagonal inequality} for psd matrices, which states that the $(i,j)$-entry is bounded in magnitude by the geometric mean of the $i$th and $j$th diagonal entries.

\begin{fact}[Off-diagonal inequality]
    \label{fact:psd-inequality}
    Let \(\bA\in\R^{n\times n}\) be psd. For all \(i,j\in[n]\),
    \[
        \bA_{ij}^2\leq \bA_{ii}^{\vphantom{2}}\bA_{jj}^{\vphantom{2}}.
    \]
\end{fact}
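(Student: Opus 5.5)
The plan is to reduce to the $2\times 2$ principal submatrix indexed by $\{i,j\}$. When $i=j$ the claim reads $\bA_{ii}^2\le\bA_{ii}^2$, so assume $i\neq j$. The only inputs needed are the definition of psd, that $\bA$ is symmetric, and a positivity test for the resulting binary quadratic form.

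First I will restrict the quadratic form. For any scalars $s,t\in\R$, set $\bx = s\,\mathbf{e}_i + t\,\mathbf{e}_j$, where $\mathbf{e}_i,\mathbf{e}_j$ are standard basis vectors. Since $\bA$ is symmetric, positive semidefiniteness gives
\[
0 \le \bx^\top \bA \bx = \bA_{ii}s^2 + 2\bA_{ij}st + \bA_{jj}t^2 .
\]

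Next I will pass from nonnegativity of this form to the inequality, splitting into two cases.
\begin{itemize}
\item If $\bA_{ii}=0$, take $t=1$ and let $s$ range over $\R$. The expression $2\bA_{ij}s+\bA_{jj}$ is affine in $s$ and nonnegative for every $s$, which forces $\bA_{ij}=0$. Hence $\bA_{ij}^2 = 0 = \bA_{ii}\bA_{jj}$. The case $\bA_{jj}=0$ is the same with the roles of $i$ and $j$ swapped.
\item Otherwise $\bA_{ii}>0$, since diagonal entries of a psd matrix are nonnegative (take $\bx=\mathbf{e}_i$). Fix $t=1$. The quadratic $\bA_{ii}s^2+2\bA_{ij}s+\bA_{jj}$ in $s$ is nonnegative for all $s$, so its discriminant is nonpositive: $4\bA_{ij}^2-4\bA_{ii}\bA_{jj}\le 0$, which is the claim. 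Equivalently, substitute the minimizer $s=-\bA_{ij}/\bA_{ii}$ to get $\bA_{jj}-\bA_{ij}^2/\bA_{ii}\ge 0$.
\end{itemize}

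There is no real obstacle; the only care needed is the degenerate case of a zero diagonal entry, where the discriminant argument does not apply directly. An alternative route avoids the case split entirely: write $\bA=\bv{B}^\top\bv{B}$, so that $\bA_{ij}=\langle \bv{B}_{*,i},\bv{B}_{*,j}\rangle$, and apply Cauchy--Schwarz. That version relies on the existence of a Gram factorization, whereas the argument above uses only the definition of psd.
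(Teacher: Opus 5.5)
Your proof is correct and follows essentially the paper's route. The paper also restricts to the $\{i,j\}$ principal submatrix and concludes from its psd-ness that $\bA_{ii}\bA_{jj}-\bA_{ij}^2\ge 0$; your discriminant argument, with the zero-diagonal case handled separately, is a self-contained proof of that same $2\times 2$ determinant fact directly from the definition of psd.
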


\Cref{fact:psd-inequality} is easy to prove: since $\bA$ is psd, the $\{i,j\}$-principal submatrix is psd as well, so its determinant $\bA_{ii}^{\vphantom{2}} \bA_{jj}^{\vphantom{2}} - \bA_{ij}^2$ is nonnegative. Intuitively, the off-diagonal inequality is critical to achieving sublinear time length-squared sampling for psd matrices.
In particular, it shows that the largest entry always lies on the diagonal, and large off-diagonal entries can only appear in rows and columns with large diagonals. This rules out the $\Omega(n^2)$ hard case for length-squared sampling from general matrices, where a single nonzero entry is hidden in an arbitrary location and must be found.

We are now ready to prove our main guarantee for sampling from the entry-square distribution with \cref{alg:sampling-procedure}. The result implies \Cref{thm:intro-fast-l2-sampling} as an immediate corollary. 

\begin{theorem}[Sublinear time entry-squared sampling]
    \label{thm:fast-l2-sampling} Let $\bv A \in \R^{n \times n}$ be a nonzero psd matrix.
    Let $(I, J)$ denote the pair returned by \cref{alg:sampling-procedure} on input $\bv A$. For all $i,j\in [n]$, we have
    \[
        \Pr[I = i, J = j]
        =
        \frac{\bA_{ij}^2}{\fronorm{\bA}^2}.
    \]
    In turn,
    $
        \Pr[I = i]
        =
        \frac{\|\bA_{*,i}\|_2^2}{\fronorm{\bA}^2} 
    $ and $
        \Pr[J = j]
        =
        \frac{\|\bA_{*,j}\|_2^2}{\fronorm{\bA}^2}.
    $
    The algorithm requires \(2n\) entry queries and $O(n)$ arithmetic operations in expectation.
    Moreover, with probability $1-\delta$, the algorithm terminates using at most $n + \lceil n \log(1/\delta) \rceil$ entry accesses and $O(n \log(1/\delta))$ arithmetic operations.
\end{theorem}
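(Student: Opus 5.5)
The plan is the standard rejection-sampling argument, followed by a lower bound on the acceptance probability. First, note that $\bA\neq 0$ psd forces $\tr(\bA)>0$, since a psd matrix with zero trace is zero; hence $\mathcal D$ is well defined. By \cref{fact:psd-inequality}, the acceptance probability in line 4 lies in $[0,1]$. If $\bA_{II}$ or $\bA_{JJ}$ were zero, the pair would be proposed with probability zero, so that case never occurs.

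In a single iteration, the probability of proposing $(i,j)$ and then accepting is
\[
\frac{\bA_{ii}\bA_{jj}}{\tr(\bA)^2}\cdot\frac{\bA_{ij}^2}{\bA_{ii}\bA_{jj}}=\frac{\bA_{ij}^2}{\tr(\bA)^2}.
\]
Summing over $(i,j)$, the per-iteration acceptance probability is $p=\fronorm{\bA}^2/\tr(\bA)^2$. The iterations are i.i.d., so the output distribution equals the conditional distribution given acceptance, namely $\bA_{ij}^2/\fronorm{\bA}^2$. Marginalizing over $j$ gives $\sum_j \bA_{ij}^2=\|\bA_{i,*}\|^2$, and symmetry of $\bA$ turns this into the column norm. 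The argument for $J$ is the same.

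The key step is to bound $p$ from below. Writing $\lambda_1,\dots,\lambda_n\ge 0$ for the eigenvalues, we have $\tr(\bA)=\sum\lambda_k$ and $\fronorm{\bA}^2=\sum\lambda_k^2$, so Cauchy--Schwarz gives $\tr(\bA)^2\le n\fronorm{\bA}^2$, i.e.\ $p\ge 1/n$. An alternative that avoids eigenvalues is $\fronorm{\bA}^2\ge\sum_i\bA_{ii}^2\ge \tr(\bA)^2/n$.

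For the cost, reading the diagonal takes $n$ queries. Each iteration queries only the single off-diagonal entry $\bA_{IJ}$, since $\bA_{II},\bA_{JJ}$ are already known. The number of iterations $T$ is geometric with parameter $p\ge 1/n$, so $\E[T]\le n$ and the expected number of queries is at most $2n$. For arithmetic, I will preprocess $\mathcal D$ into an alias table (or use prefix sums with binary search) in $O(n)$ time; the alias table makes each sample $O(1)$, so the expected arithmetic cost is $O(n)$. For the high-probability bound, $\Pr[T>m]=(1-p)^m\le e^{-m/n}\le\delta$ once $m=\lceil n\log(1/\delta)\rceil$. The run then uses at most $n+m$ queries and $O(n+m)=O(n\log(1/\delta))$ operations, interpreting this bound for $\delta$ bounded away from $1$, or else absorbing the $O(n)$ term.

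The one real obstacle is the bound $p\ge 1/n$, and it is just Cauchy--Schwarz. To get $O(1)$ per iteration rather than logarithmic time, I need the alias method.
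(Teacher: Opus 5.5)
Your proposal is correct and follows the paper's proof essentially step for step: the per-round probability $\bA_{ij}^2/\tr(\bA)^2$, conditioning on acceptance, the Cauchy--Schwarz bound $\fronorm{\bA}^2/\tr(\bA)^2 \ge 1/n$ (your diagonal-only alternative is exactly the paper's argument), and the geometric tail bound $(1-\beta)^R \le \mathrm{e}^{-R/n}$. Your alias-table remark also makes explicit what the paper leaves implicit: the paper simply asserts a constant number of arithmetic operations per round, and the alias table is what justifies the $O(1)$ sampling cost behind that claim.
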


\begin{proof}
    Since $\bv A$ is psd and nonzero, it has nonnegative diagonal entries and $\tr(\bv A) > 0$. Thus, the diagonal distribution $\mathcal D(i)={\bA_{ii}}/{\tr(\bA)}$ computed in line 1 is well-defined. 
    The off-diagonal inequality (\cref{fact:psd-inequality}) ensures that
    \(
        {\bA_{ij}^2}/({\bA_{ii}\bA_{jj}})\in[0,1]
    \)
    for all $i,j \in [n]$, so the acceptance probability in line 4 is also well-defined.

    Fix a pair \((i,j)\).
    In one round, the probability that the algorithm returns \((i,j)\) is
    \[
        \frac{\bA_{ii}}{\tr(\bA)}
        \cdot
        \frac{\bA_{jj}}{\tr(\bA)}
        \cdot
        \frac{\bA_{ij}^2}{\bA_{ii}\bA_{jj}}
        =
        \frac{\bA_{ij}^2}{\tr(\bA)^2}.
    \]
    Summing over all pairs $(i,j)$, the total probability of returning in each round is
    \[
        \beta
        =
        \sum_{i,j\in[n]}
        \frac{\bA_{ij}^2}{\tr(\bA)^2}
        =
        \frac{\fronorm{\bA}^2}{\tr(\bA)^2}.
    \]
    Thus, conditioning on acceptance, the probability of outputting \((i,j)\) is
    \[
        \frac{\bA_{ij}^2/\tr(\bA)^2}{\beta}
        =
        \frac{\bA_{ij}^2}{\fronorm{\bA}^2},
    \]
    as claimed.
    Summing over the first coordinate $i$ gives
    \[
        \Pr[J = j]
        =
        \sum_{i\in[n]}
        \frac{\bA_{ij}^2}{\fronorm{\bA}^2}
        =
        \frac{\|\bA_{*,j}\|_2^2}{\fronorm{\bA}^2}.
    \]
    Similarly, summing over the second coordinate gives $\Pr[I = i] = \frac{\|\bA_{i,*}\|_2^2}{\fronorm{\bA}^2} = \frac{\|\bA_{*,i}\|_2^2}{\fronorm{\bA}^2}$, where the second equality follows by symmetry of $\bv A$.
    
    Finally, we obtain a lower bound on the success probability for one round. We have:
    \begin{align*}        
        \beta
        =
        \frac{\fronorm{\bA}^2}{\tr(\bA)^2} \ge
        \frac{\sum_{i=1}^n \bv A_{ii}^2}
        {\left(\sum_{i=1}^n \bv A_{ii} \right)^2}\geq
        \frac{1}{n}\qquad\text{by Cauchy--Schwarz.}
    \end{align*}

    The number of rejection sampling rounds needed is a geometric random variable with mean $1/\beta$.
    Consequently, the expected number of rounds is
    \(1/\beta\leq n\) and the algorithm terminates in $R$ rounds except with probability $(1-\beta)^{R} \le \mathrm{e}^{-\beta R} \le \mathrm{e}^{-R/n}$.
    The algorithm reads the diagonal once in line 1 to form
    \(\mathcal{D}\), requiring $n$ entry queries. In each round, it uses one additional entry query and a constant number of arithmetic operations.
    The stated resource estimates follow.
\end{proof}

Note that if one desires an algorithm with a fixed runtime upper bound, one can always terminate \Cref{alg:sampling-procedure} after $O(n \log(1/\delta))$ iterations. If the algorithm has not returned a sample before then, which will happen with probability at most $\delta$, an arbitrary index can be output. This yields an algorithm that samples from a distribution that is $\delta$-close in total variation distance to the true entry-squared (or length-squared) distribution. In most applications, exact length-squared sampling is not required -- it typically suffices to use any sampler that outputs $i \in [n]$ with probability $\ge \frac{c\|\bA_{*,i}\|_2^2}{\fronorm{\bA}^2}$ for some constant $c > 0$ \cite{drineas_fast_2006-1}. Setting $\delta$ to a constant suffices for such a guarantee to hold.

Also note that  our rejection sampling approach can also be used to sample from the $p$th-power entry distribution $\mathcal{E}_p(i,j) = |\bA_{ij}|^p / \sum_{k,\ell} |\bA_{k\ell}|^p$ and perform $\ell_p$ column-norm sampling tasks with $O(n)$ complexity; see \Cref{app:lp-sampling} -- the algorithm and proof are essentially identical to \Cref{alg:sampling-procedure} and \Cref{thm:fast-l2-sampling}.

\subsection{Optimality}

The $O(n)$ runtime of \Cref{thm:fast-l2-sampling} is clearly optimal -- even if $\bv A$ is just a positive diagonal matrix, we must read its full diagonal if we hope to perform length-squared sampling. For example, there could be a single nonzero diagonal entry, which we must identify. Indeed, \Cref{alg:sampling-procedure} starts by reading the full diagonal in line 1. 

Many applications, however, require taking multiple length-squared samples from the same matrix. Since the cost of reading the diagonal can be amortized across the samples, one might hope that $o(n)$ amortized time is possible in this setting. 
Unfortunately, we prove that this is not the case -- for any $m < n$, taking $m$ independent samples from the length-squared distribution requires $\Omega(mn)$ time. That is, one can do no better than simply repeating \Cref{alg:sampling-procedure} $m$ times. Formally:

\begin{theorem}[Length-squared sampling: Optimality] \label{thm:optimality}
    Any algorithm that, given any nonzero psd matrix $\bv A \in \R^{n \times n}$ outputs $m$  indices
    $i_1,\ldots,i_m\in[n]$ sampled independently from the length-squared distribution \cref{eq:length-squared}
    must read 
    \(
        \Omega\!\left(\min\{mn,n^2\}\right)
    \)
    entries of $\bv A$ in expectation in the worst case.
\end{theorem}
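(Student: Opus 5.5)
We will prove the lower bound with a hard distribution over diagonal matrices, where length-squared sampling reduces to locating hidden heavy diagonal entries. It suffices to treat $m \le n$, since for $m>n$ the bound $\Omega(n^2)$ follows from the case $m=n$ by discarding the extra samples. Assume $m \le n/2$ and $m \ge 1$, and let $k = m$. Draw a uniformly random subset $S\subseteq[n]$ of size $k$ and set $\bA = \mathrm{diag}(a)$ with $a_i = 1$ for $i\in S$ and $a_i = 0$ otherwise. (If strictly positive entries are preferred, use $a_i=\gamma$ tiny off $S$ and absorb an $O(n\gamma^2)$ total-variation error.) The matrix is psd, and its length-squared distribution is uniform on $S$. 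Every off-diagonal query returns $0$ and carries no information, so the only informative queries are diagonal ones.

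The key step is to show that $m$ independent samples from the uniform distribution on $S$ contain $\Omega(m)$ distinct elements of $S$ with constant probability. Each sample lands on a fresh element with probability at least $1-(j-1)/k$ for the $j$-th draw, so by a birthday-type or expectation argument the expected number of distinct values is at least $(1-1/e)k = \Omega(m)$. Markov applied to the number of missed elements then gives at least $cm$ distinct outputs with probability at least $1/2$. Whenever an output is correct, every index it returns lies in $S$. If the algorithm has never queried $\bA_{ii}$ for an output $i$, then conditional on the transcript, $i$ lies in $S$ with probability only about $k/(n-q)$, where $q$ is the number of queries. Hence the algorithm must essentially query each of its $\Omega(m)$ distinct outputs and confirm that it lies in $S$.

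The main obstacle is turning this into a bound on the number of queries, since it is not enough to say that the outputs were queried; the algorithm must also have found them. The approach is a standard hidden-set search argument (Yao's principle). Queries reveal whether $i\in S$, and conditional on the past each new unqueried index lies in $S$ with probability at most $k/(n-q) \le 2k/n$ while $q\le n/2$. So after $q$ queries the expected number of elements of $S$ discovered is at most $2kq/n$. Outputting $cm$ distinct elements of $S$ with probability $\Omega(1)$ then requires, via Markov on the discovered count plus a union bound over unqueried guesses, which succeed with probability $O(k/n)$ each and contribute at most $O(m\cdot m/n)$ expected hits, that $q = \Omega(n)\cdot\Omega(m)/k$ up to constants when $m\le n/2$. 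This yields $\Omega(mn/k)$.

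That bound is only $\Omega(n)$, which reveals that the choice $k=m$ in fact gives $q = \Omega(n)$, not $\Omega(mn)$. The search must be made $m$ times harder, so I would instead use $k=m$ heavy indices among $n$ and argue that each discovery costs $\Omega(n/k)$ queries in expectation. The total is then $\Omega(m\cdot n/m)$, which is again only $n$. I expect this tension to be the real difficulty. To fix it, I would take $S$ of size $k=m$ but weight the entries so that each heavy index costs $\Omega(n)$ to find. Concretely, I would hide the heavy entries within disjoint blocks of size $n/m$... Since each discovery then costs only $n/m$, the plan should instead exploit off-diagonal structure: use block matrices $\bv{u}\bv{u}^\top$ whose column norms encode hidden information requiring reading $\Omega(n)$ entries of a column per sample. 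This final construction is the step I would work out most carefully.
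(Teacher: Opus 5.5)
\textbf{The proposal has a genuine gap: it never produces a construction that gives $\Omega(mn)$.} The only instance you analyze, a random $0/1$ diagonal matrix with $|S|=m$, yields just $\Omega(n)$, as you correctly note yourself. The replacement construction at the end is left unspecified.

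Better counting cannot fix this. Any hard distribution supported on diagonal matrices certifies at most an $O(n)$ lower bound. The reason is that the algorithm which reads the diagonal and then outputs $m$ i.i.d.\ indices with probability proportional to $\bA_{ii}^2$ is correct on every such input with $n$ queries. More generally, whenever the diagonal reveals which columns are heavy, $n$ diagonal reads defeat the instance.

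Your reduction also asks for more than necessary. Demanding $\Omega(m)$ distinct heavy outputs forces you to engineer $m$ separate costly discoveries. It suffices that the $m$ outputs contain a single heavy index, provided that one discovery costs $\Omega(mn)$.

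\textbf{The missing idea} is to make the diagonal completely uninformative and hide the heavy mass in a few off-diagonal entries. The amount of mass is tuned so that only $\Theta(1)$ of the $m$ samples are expected to be heavy. The paper uses $\bA=\bP(\ones_{s\times s}\oplus\Id_{n-s})\bP^\top$ with $s=\lceil 2\sqrt{n/m}\rceil$ and a uniformly random permutation $\bP$, for $n\ge 1000$ and $10\le m\le n/100$.

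\emph{Samples find a heavy column.} Every diagonal entry equals $1$. The $s$ heavy columns have squared norm $s$, so their total mass $s^2\approx 4n/m$ is a $\Theta(1/m)$ fraction of $\fronorm{\bA}^2=s^2+n-s$. Each sample is therefore heavy with probability at least $2.5/m$, and $m$ independent samples contain a heavy index with probability at least $1-\mathrm{e}^{-2.5}$.

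\emph{Queries rarely find one.} By Yao's principle, fix a deterministic algorithm and let $N_0$ be the set of positions it queries on input $\Id_n$. On the random instance it sees exactly the answers it would see on $\Id_n$ unless some position in $N_0$ is one of the $s(s-1)$ off-diagonal ones hidden among $n(n-1)$ positions. This happens with probability at most $|N_0|\,s(s-1)/(n(n-1))$. Otherwise it outputs a fixed set of $m$ indices, which contains a heavy one with probability at most $ms/n=O(\sqrt{m/n})$.

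\emph{Conclusion.} Success probability $2/3$ then forces $|N_0|=\Omega(n^2/s^2)=\Omega(mn)$. The remaining ranges of $m$ follow from the trivial $\Omega(n)$ bound and monotonicity in $m$.

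Your closing suggestion of hidden blocks located through off-diagonal queries points in this direction. As written, though, the proposal contains no concrete construction, no computation of the probability that samples land on heavy columns, and no query lower bound for it.
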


\begin{proof}
    We assume $n \ge 1000$ and $10 \le m \le n/100$, proving a lower bound of $\Omega(mn)$ in this setting, which yields the theorem. 
    Fix $s \coloneqq \lceil 2\sqrt{n/m}\rceil$.
    Our assumptions ensure $s\le n$.
    Consider the random hard instance 
    \begin{equation*}
        \bA = \bP\begin{bmatrix}
            \ones_{s\times s} & \zeros_{s\times (n-s)} \\
            \zeros_{(n-s)\times s} & \Id_{n-s}
        \end{bmatrix}\bP^\top,
    \end{equation*}
    where $\ones_{(\cdot)}$ and $\zeros_{(\cdot)}$ denote matrices of all ones and zeros, respectively, of the appropriate dimensions, $\bv I_{n-s}$ is the $n-s \times n-s$ identity matrix, and $\bP$ is a uniformly random permutation matrix.
    Call a column of $\bA$ \emph{heavy} if it has squared norm $s$. There are $s$ heavy columns. All other columns have unit norm.
    
    Intuitively, since $\bv A$ has just $O(s^2)$ nonzero off-diagonal entries in random locations, identifying one of these entries, and thus identifying one of the heavy columns with good probability, requires reading $\Omega(n^2/s^2) = \Omega(nm)$ entries. At the same time, the sum of the squared lengths of the heavy columns is $s^2 = \Theta(n/m) = \Theta(\fronorm{\bv A}^2/m)$, so $m$ length-squared samples suffice to identify a heavy column with good probability. Thus, forming $m$ length-squared samples must require reading $\Omega(nm)$ entries of $\bv A$.
    
    Formally, consider the following \emph{heavy column problem}: output a subset $S \subseteq [n]$ of $m$ elements such that at least one element is heavy.
    We reduce the heavy column problem to length-squared sampling.
    Indeed, given access to an algorithm that produces $m$ samples from the length-squared distribution using $t$ queries in expectation, one can solve the heavy column problem with probability at least $2/3$ using $6t$ queries in expectation.
    By Markov's inequality, the algorithm succeeds in $6t$ queries with probability at least $5/6$.
    Conditional on the algorithm succeeding, each output sample is heavy with probability
    \[
        \frac{s^2}{\fronorm{\bv A}^2}
        = \frac{s^2}{s^2+n-s}
        \geq \frac{4n/m}{4n/m+n}
        = \frac{4}{m+4}
        \geq \frac{2.5}{m},
    \]
    where in the second inequality we use our assumption that $m \ge 10$ and thus $m+4 \le 8m/5$.

    Thus, none of the returned columns is heavy with probability at most $(1 - 2.5/m)^m \le \mathrm{e}^{-2.5} < 1/6$.
    Overall, the algorithm thus solves the heavy column problem with probability at least $2/3$.

    We next show that any algorithm solving the heavy column problem with probability at least $2/3$ must use $\Omega(mn)$ entry accesses in expectation.
    By Yao's minimax principle, we are free to assume without loss of generality that the algorithm is deterministic.
    Define the following sets:
    \begin{enumerate}
        \item Let $N_0\subseteq [n]^2$ be the set of coordinates queried by the deterministic algorithm on input $\bI_n$.
        \item Let $S_0\subseteq[n]$ be the set of elements output by the deterministic algorithm on input $\bI_n$.
    \end{enumerate}
    Now, suppose we run the deterministic algorithm on $\bA$ drawn from our hard input distribution, and let $\mathcal{E}$ be the event that the algorithm queries an off-diagonal 1. On $\overline{\mathcal{E}}$, the algorithm receives
    exactly the same answers as on $\Id_n$ and hence outputs $S_0$. 
    We have the chain of inequalities
    \begin{multline*}
        \frac{2}{3} \le \Pr[\text{algorithm succeeds}] = \Pr[\text{algorithm succeeds} \mid \mathcal{E}] \cdot \Pr[\mathcal{E}] +  \Pr[\text{algorithm succeeds} \mid \overline{\mathcal{E}}] \cdot \Pr[\overline{\mathcal{E}}] \\
        \le \Pr[\mathcal{E}] + \Pr[\text{algorithm succeeds} \mid \overline{\mathcal{E}}] = \Pr[\mathcal{E}] + \Pr[\text{$S_0$ contains a heavy element}].
    \end{multline*}
    We can bound the probabilities of these events as follows:
    \begin{align*}
        &\Pr[\mathcal{E}] \le \sum_{(i,j) \in N_0} \Pr[\text{$i$ is heavy and $j$ is heavy}] \le |N_0|\cdot \frac{s(s-1)}{n(n-1)}; \\
        &\Pr[\text{$S_0$ contains a heavy element}] \le \sum_{i \in S_0} \Pr[\text{$i$ is heavy}] = |S_0| \cdot \frac{s}{n} = \frac{ms}{n}.
    \end{align*}
    Rearranging, we thus have
    \begin{align*}
    |N_0|\cdot \frac{s(s-1)}{n(n-1)} + \frac{ms}{n} &\ge \frac{2}{3}\\
        |N_0| &\ge \frac{n(n-1)}{s(s-1)} \cdot \left(\frac{2}{3} - \frac{ms}{n}\right) \\
        &=\Theta(mn).
    \end{align*}
    Therefore, any algorithm solving the heavy column problem with probability at least $2/3$ on a worst-case input must make $\Omega(mn)$ queries in expectation, completing the proof.
\end{proof}

\section{Frobenius Norm Approximation}
\label{sec:frobenius-norm-approximation}

The ideas of \cref{alg:sampling-procedure} naturally extend to yield a simple sublinear time algorithm (\Cref{alg:frobenius-norm-approximation}) for approximating the Frobenius norm of psd $\bv A$ to relative error.
The basic idea is that, for $I,J$ sampled  independently from the diagonal sampling distribution $\mathcal{D}(i) = \bA_{ii} / \tr(\bA)$, the quantity $X \coloneqq \bA_{IJ}^2 / \bA_{II} \bA_{JJ}$ forms an unbiased estimate for the ratio $\fronorm{\bA}^2 / \tr(\bA)^2$.
By averaging iid copies of $X$, we obtain an estimator that concentrates around its expected value. Rescaling by $\tr(\bv A)^2$ gives an estimator for $\fronorm{\bA}^2$.

\begin{algorithm}[H]
\caption{Frobenius Norm Approximation} \label{alg:frobenius-norm-approximation}
\begin{algorithmic}[1]
\Require Nonzero psd matrix $\bA \in \R^{n\times n}$, number of trials $s$.
\Ensure Estimate $\widehat{F}$ for $\fronorm{\bA}^2$.
\State Read the diagonal of $\bA$ and define $\mathcal{D}(i) = \bA_{ii} / \tr(\bA)$ for all $i \in [n]$.
\For{$k = 1, \ldots, s$}
\State Sample independent $I,J\sim \mathcal{D}$.
\State $X_k \gets \bA_{IJ}^2 / (\bA_{II}\bA_{JJ})$.
\EndFor
\State \Return $\widehat{F} \coloneqq \tr(\bA)^2 \cdot \frac{1}{s} \sum_{k = 1}^s X_k$
\end{algorithmic}
\end{algorithm}

\begin{theorem}[Frobenius norm approximation]
\label{thm:frobenius-norm-approximation-algorithm}
Let \(\bA\in\R^{n\times n}\) be a nonzero psd matrix. For any \(\epsilon,\delta\in(0,1)\), \cref{alg:frobenius-norm-approximation} with $s = \lceil 3n\log(2/\delta)/\epsilon^2\rceil$ produces an estimate \(\widehat{F}\) satisfying
\[
    (1-\epsilon)\fronorm{\bA}^2
    \leq
    \widehat{F}
    \leq
    (1+\epsilon)\fronorm{\bA}^2 \quad \text{with probability } \ge 1-\delta.
\]
The algorithm requires $O(n+s)=O(n\log(1/\delta)/\epsilon^2)$ entry queries and arithmetic operations. 
\end{theorem}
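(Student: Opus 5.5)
Since each $X_k$ takes values in $[0,1]$ and has small mean, the plan is a multiplicative Chernoff bound for bounded independent variables. First, exactly as in the proof of \cref{thm:fast-l2-sampling}, the off-diagonal inequality (\cref{fact:psd-inequality}) gives $X_k = \bA_{IJ}^2/(\bA_{II}\bA_{JJ}) \in [0,1]$. The same one-round computation shows that each $X_k$ has mean
\[
    \E[X_k] = \sum_{i,j}\frac{\bA_{ii}\bA_{jj}}{\tr(\bA)^2}\cdot\frac{\bA_{ij}^2}{\bA_{ii}\bA_{jj}} = \frac{\fronorm{\bA}^2}{\tr(\bA)^2} = \beta .
\]
Pairs with $\bA_{ii}\bA_{jj}=0$ are never sampled, and they contribute zero on the right-hand side since then $\bA_{ij}=0$. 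Hence $\widehat F = \tr(\bA)^2 \cdot \frac1s\sum_k X_k$ is unbiased for $\fronorm{\bA}^2$. The event $\widehat F \in (1\pm\epsilon)\fronorm{\bA}^2$ is then exactly the event that $S = \sum_k X_k$ lies in $(1\pm\epsilon)\,s\beta$.

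Next I would apply the multiplicative Chernoff bound for sums of independent $[0,1]$-valued variables with mean $\mu = s\beta$:
\[
    \Pr\bigl[|S-\mu| \ge \epsilon\mu\bigr] \le 2\exp(-\epsilon^2\mu/3), \qquad \epsilon\in(0,1).
\]
This bound holds for bounded variables, not just Bernoulli ones, via the convexity bound $\E[e^{tX}] \le 1 + \E[X](e^t-1)$. The key input is again the lower bound $\beta \ge 1/n$ from the Cauchy--Schwarz step in the proof of \cref{thm:fast-l2-sampling}. With $s = \lceil 3n\log(2/\delta)/\epsilon^2\rceil$ this gives $\mu \ge 3\log(2/\delta)/\epsilon^2$, so the failure probability is at most $2\exp(-\log(2/\delta)) = \delta$.

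The cost accounting is routine: reading the diagonal takes $n$ queries, and each of the $s$ trials takes one off-diagonal query plus $O(1)$ arithmetic for sampling from $\mathcal D$. Sampling from $\mathcal D$ uses an $O(n)$ preprocessing step, e.g.\ an alias table, or prefix sums with binary search if we accept a log factor; I would use the alias method to keep $O(1)$ per sample. The total is $O(n+s) = O(n\log(1/\delta)/\epsilon^2)$.

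The only real obstacle is making sure the Chernoff constant matches the stated $s$. The two-sided bound with exponent $\epsilon^2\mu/3$ covers both tails: the lower tail has the better constant $1/2$, and the upper tail gives $1/3$ for $\epsilon\le 1$. If that constant turned out to be too tight, the fallback is Bernstein's inequality using $\mathrm{Var}(X_k) \le \E[X_k^2] \le \E[X_k] = \beta$, which yields the same $O(n\log(1/\delta)/\epsilon^2)$ scaling.
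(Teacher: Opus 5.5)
Your proposal is correct and follows the paper's proof: you use that $X_k$ is unbiased for $\fronorm{\bA}^2/\tr(\bA)^2$, that $X_k\in[0,1]$ by the off-diagonal inequality, a two-sided multiplicative Chernoff bound with exponent $\epsilon^2 s\,\fronorm{\bA}^2/(3\tr(\bA)^2)$, and $\tr(\bA)^2\le n\fronorm{\bA}^2$ to calibrate $s$. Your remarks on zero-diagonal pairs and on using an alias table for $O(1)$-time sampling from $\mathcal{D}$ fill in details the paper leaves implicit.
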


\begin{proof}
The random variables $X_1,\ldots,X_s$ are iid, with mean 
\begin{equation*}
    \E[X_k] = \sum_{i,j \in [n]} \frac{\bA_{ij}^2}{\bA_{ii}\bA_{jj}} \cdot \frac{\bA_{ii}}{\tr(\bA)}\cdot \frac{\bA_{jj}}{\tr(\bA)} = \frac{\fronorm{\bA}^2}{\tr(\bA)^2}.
\end{equation*}
Moreover, the off-diagonal inequality (\Cref{fact:psd-inequality}) ensures that $X_k \in [0,1]$.
Thus, by a standard Chernoff bound \cite[App.~A]{CR14a},
\begin{equation*}
    \Pr\left[\, \left|\widehat{F} - \fronorm{\bA}^2\right| \ge \epsilon \fronorm{\bA}^2 \,\right] \le 2 \exp\left(-\frac{s\fronorm{\bA}^2\epsilon^2}{3\tr(\bA)^2}\right) \quad \text{for any } \epsilon \in (0,1).
\end{equation*}
Setting $s = \lceil 3n\log(2/\delta)/\epsilon^2\rceil$ and using the inequality $\tr(\bA) \le \sqrt{n}\fronorm{\bA}$, we conclude that $\left|\widehat{F} - \fronorm{\bA}^2\right| \le \epsilon \fronorm{\bA}^2$ with probability at least $1-\delta$, as desired.
\end{proof}

\subsection{Optimality}

We next show that \Cref{alg:frobenius-norm-approximation} is asymptotically optimal: any algorithm estimating the Frobenius norm of a psd input matrix $\bv A \in \R^{n \times n}$ to $(1\pm \epsilon)$ relative error with good probability must read $\Omega(n/\epsilon^2)$ entries of $\bv A$ in expectation. Formally:

\begin{theorem}[Frobenius norm approximation: Optimality]\label{thm:frob_norm_optimality}
    Any algorithm that, given a nonzero psd matrix $\bA\in\R^{n\times n}$ outputs an estimate $\widehat F$ satisfying for $\epsilon \in (0,1/2)$
    \[
    (1-\epsilon)\fronorm{\bA}^2
    \leq
    \widehat{F}
    \leq
    (1+\epsilon)\fronorm{\bA}^2 \quad \text{with probability } \ge 2/3
    \]
    must read $\Omega( \min \{n/\epsilon^2,n^2\})$ entries of $\bA$ in the worst case.
\end{theorem}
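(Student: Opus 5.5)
We reduce from distinguishing two random psd instances whose squared Frobenius norms differ by a factor $1+\Theta(\epsilon)$, and then show that telling them apart requires $\Omega(n/\epsilon^2)$ off-diagonal queries. We may assume $\epsilon \ge 1/\sqrt{n}$ (otherwise we prove $\Omega(n^2)$ by taking $\epsilon$ to be the boundary value $1/\sqrt{n}$), so the target is $\Omega(n/\epsilon^2)$.

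\textbf{The two instances.} Pattern them on the hard instance in the proof of \Cref{thm:optimality}: a block of all-ones hidden in an otherwise diagonal matrix. Fix $q := \lceil c\,\epsilon\sqrt{n}\rceil$ for a small constant $c$. Let $\bA^{(0)} = \bP\,\mathrm{diag}(\ones_{q\times q},\ldots,\ones_{q\times q})\,\bP^\top$, a uniformly permuted block-diagonal matrix with $n/q$ all-ones blocks of size $q$. Let $\bA^{(1)}$ be the same, except that a constant fraction (say half) of the blocks are replaced by $\Id_q$.
\begin{itemize}
\item Every diagonal entry equals $1$, so diagonal queries reveal nothing.
\item $\fronorm{\bA^{(0)}}^2 = nq$.
\item $\fronorm{\bA^{(1)}}^2 = nq/2 + n/2$, which differs by a constant factor — too big.
\end{itemize}
So instead perturb more gently: keep all blocks in both instances, and in $\bA^{(1)}$ use blocks of size $q' = q(1+4\epsilon)$ (rounded). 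Then $\fronorm{\bA}^2 = nq$ versus $nq'$, a gap of $1+4\epsilon$, which any $(1\pm\epsilon)$-estimator must distinguish.

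\textbf{Why this is hard to distinguish.} The algorithm only learns information when it queries an off-diagonal pair lying in a common block. For a single query this happens with probability about $q/n$. The two hypotheses differ in that probability by a relative factor $1+4\epsilon$, i.e.\ $p_0 = q/n$ versus $p_1 = q(1+4\epsilon)/n$.

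Heuristically the algorithm is estimating a Bernoulli rate from $t$ trials. Distinguishing $p$ from $p(1+\Theta(\epsilon))$ needs $\Omega(1/(p\epsilon^2)) = \Omega(n/(q\epsilon^2))$ trials. This is too weak unless $q = O(1)$. So take $q$ constant: blocks of size $2$, i.e.\ disjoint pairs $\{i,j\}$ with $\bA_{ij} = 1$.
\begin{itemize}
\item In $\bA^{(0)}$, use $m_0 = n/4$ such pairs; in $\bA^{(1)}$, use $m_1 = (1+8\epsilon)n/4$ pairs, with remaining coordinates isolated (all matrices are psd as block sums of $\ones_{2\times 2}$ and $1\times 1$ identity blocks).
\item Squared norms are $n + 2m$, so they differ by a $1+\Theta(\epsilon)$ factor.
\item Each off-diagonal query hits a pair with probability about $2m/n^2$, i.e.\ $\Theta(1/n)$, with a relative gap $\Theta(\epsilon)$.
\end{itemize}
The Bernoulli heuristic then gives $\Omega(n/\epsilon^2)$ queries, as desired.

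\textbf{Making it rigorous.} By Yao's principle, take a deterministic adaptive algorithm making $t$ queries. The transcript is determined by which queried pairs return $1$. Two steps remain.
\begin{enumerate}
\item Adaptivity gives little advantage. Conditioned on the history, the next query $(i,j)$ hits a pair with probability close to $m/\binom{n}{2}$, up to corrections of order (number of hits so far)$/n^2$. Each discovered pair only removes $O(1)$ candidates, and unqueried structure stays uniformly random given the transcript.
\item Bound the divergence between the two transcript distributions. Compare them to $t$ independent Bernoulli$(p_0)$ versus Bernoulli$(p_1)$ sequences, either by a chain rule for KL divergence (each step contributes $O(p\epsilon^2)$ plus error) or by coupling. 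Summed over steps this is $O(t\epsilon^2/n)$, which must be $\Omega(1)$ for success probability $2/3$. Hence $t = \Omega(n/\epsilon^2)$.
\end{enumerate}

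The main obstacle is step 1, controlling the conditional hit probabilities under adaptivity. Two things could go wrong. First, queries at a coordinate already known to be paired, or already known to be unpaired, have different conditional rates. Second, $t$ may be as large as $n/\epsilon^2$, so many pairs get discovered.

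I would handle this by revealing to the algorithm, for free, both endpoints' full status upon any hit or any query involving a coordinate. This only strengthens the algorithm. Then compute the per-step KL of a Bernoulli whose parameter is $(\#\text{remaining pairs})/(\#\text{remaining unknown pairs of coordinates})$ under each hypothesis. The ratio between hypotheses stays $1+\Theta(\epsilon)$ as long as only a small fraction of coordinates have been touched. If $t \ge cn^2$ the bound is vacuous anyway, which matches the $n^2$ term in the minimum.
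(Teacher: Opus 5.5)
\textbf{There is a genuine gap: your handling of adaptivity fails, and the proposed fix destroys the bound.}

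Your hard pair is reasonable: a uniformly random matching of $n/4$ versus $(1+8\epsilon)n/4$ unit off-diagonal pairs. The Bernoulli heuristic for it is also fine. The problem is the step you flag as the main obstacle.

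\emph{Step 1 is false as stated.} Let $q=2m/n$ be the prior probability that coordinate $i$ is matched. After $k$ misses $(i,j_1),\dots,(i,j_k)$, the posterior probability that $\{i,j\}$ is a pair is $\frac{q/(n-1)}{1-qk/(n-1)}$. As $k$ approaches $n$, this climbs to roughly $\frac{1}{1-q}$ times the base rate $m/\binom{n}{2}$. That is a constant-factor change driven by misses, not an $O(h/n^2)$ correction from $h$ hits. A KL chain-rule argument would therefore need to bound, uniformly over adaptive histories and under both hypotheses, the posterior hit probability of a random matching conditioned on an arbitrary set of queried non-edges. You give no way to do this.

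\emph{Your remedy breaks the bound.} If each queried coordinate's status is revealed for free, then a query on two fresh coordinates exposes whether each is matched. That is a Bernoulli$(1/2)$ versus Bernoulli$((1+8\epsilon)/2)$ bit, so the strengthened algorithm separates the hypotheses after $O(1/\epsilon^2)$ queries. Whether a coordinate is matched is exactly the information that should cost $\Theta(n)$ queries. Relatedly, ``only a small fraction of coordinates touched'' cannot hold once $t\gg n$, which is the regime you need.

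\emph{Coupling is too weak.} The hypotheses differ in $2\epsilon n$ pairs, and $t$ queries hit one of them with probability $O(t\epsilon/n)$. This yields only $\Omega(n/\epsilon)$.

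\textbf{The missing idea, which the paper uses, is to make the positions of the nonzero off-diagonal entries independent of the hypothesis and put the signal in their values.}

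The paper takes a permuted perfect matching $\bB(\bw)$ with weights in $\{1/2,1\}$, whose fraction of ones differs by $\Theta(\epsilon)$ between the two hypotheses. The argument then has two parts.
\begin{itemize}
\item Adaptivity needs no posterior computation. Any occurrence of ``$r$ nonzeros among the first $q$ queries'' is certified by the $r$ hit times and values. Since the algorithm is deterministic, these pin down $r$ disjoint pairs that must all lie in the random matching. A union bound then gives probability at most $2^r\binom{q}{r}(2/n)^r\le(4\mathrm{e}q/(rn))^r$.
\item Separately, a mean-estimation lower bound shows that $\Omega(1/\epsilon^2)$ observed values are needed.
\end{itemize}

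Your instance can be rescued the same way.
\begin{itemize}
\item Generate your size-$m_b$ matching as a uniformly random $m_b$-subset of a uniformly random perfect matching, i.e.\ a $0/1$ labeling of its edges.
\item Let the oracle also report whenever a query lands on the perfect matching.
\end{itemize}
This only helps the algorithm, and it makes the hit pattern independent of the hypothesis. It reduces to the paper's argument with labels $\{0,1\}$ in place of $\{1/2,1\}$.
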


\begin{proof}[Proof of \cref{thm:frob_norm_optimality}]
The intuition behind the lower bound is similar to that of \Cref{thm:optimality}.  We consider a psd input matrix with $\Theta(n)$ randomly placed nonzero off-diagonal entries with random values.
Finding any one of these off-diagonal entries requires reading $\Omega(n)$ entries of $\bv A$ in expectation. Further, by standard sampling lower bounds \cite{canetti1995lower}, we must read $\Omega(1/\epsilon^2)$ of the nonzero entries to estimate their average, and in turn $\fronorm{\bA}^2$, to $(1\pm \epsilon)$-relative error. Overall, we must thus read $\Omega(n/\epsilon^2)$ entries.

Formally, we consider random matrices of the form
\[
    \bB(\bw)
    \coloneqq
    \begin{bmatrix}
        \Id_{n/2} & \operatorname{diag}(\bw)\\
        \operatorname{diag}(\bw) & \Id_{n/2}
    \end{bmatrix},
    \qquad
    \bA({\bw})
    \coloneqq
    \bP\bB(\bw)\bP^\top,
\]
for a vector $\bw\in\R^{n/2}$ and uniformly random permutation matrix $\bP$. 
Let $\bw_0\in\R^{n/2}$ have $(1/4-3\epsilon)n$ entries set to 1 uniformly at random and the rest set to 1/2, and let $\bw_1\in\R^{n/2}$ have $(1/4 + 3\epsilon)n$ entries set to 1 uniformly at random and the rest set to 1/2. Finally, let $\bA_0 = \bA({\bw_0})$ and $\bA_1 = \bA({\bw_1})$, and assume $\epsilon \ge C/\sqrt{n}$ for a sufficiently large positive constant $C > 0$.

Consider the following problem: given entry query access to a random matrix $\bA$, which is equal to either $\bA_0$ or $\bA_1$, each with probability 1/2, decide whether the given matrix is $\bA_0$ or $\bA_1$ with success probability at least 2/3. We check that (assuming $\epsilon \le 1/2$), $(1-\epsilon)\fronorm{\bA_1}^2 > (1+\epsilon)\fronorm{\bA_0}^2$, so a $(1\pm\epsilon)$ approximation to the Frobenius norm of the given matrix is enough to distinguish between $\bA_0$ and $\bA_1$.

First, note that $\bv w_0$ and $\bv w_1$ are random $\{1/2,1\}$ vectors whose means differ by $O(\epsilon)$ -- standard bounds on mean estimation thus imply that we must read $\Omega(1/\epsilon^2)$ entries of $\bv w$ to distinguish these vectors, and in turn, to distinguish $\bv A_0$ from $\bv A_1$. See e.g., Theorem 1 of \cite{canetti1995lower}.\footnote{Theorem 1 of \cite{canetti1995lower} gives a lower bound for a slightly more general problem -- where the vectors may have entries in $[0,1]$. But one can check that the hard case used in their proof is indeed equivalent to ours -- $\bv w_0$ and $\bv w_1$ are Boolean with means $1/2\pm \epsilon$.}
Thus, to complete the argument, all we must do is show that it takes $\Omega(n/\epsilon^2)$ queries to identify $\Omega(1/\epsilon^2)$ distinct nonzero off-diagonal entries. 

By Yao's minimax principle, we may assume without loss of generality that the algorithm is deterministic. We also assume without loss of generality that the algorithm never queries an unordered pair $\{i,j\}$ more than once, as doing so reveals no new information. Given integers $r$ and $q$, let $\mathcal{E}$ be the event that the algorithm, run on $\bA$, receives at least $r$ nonzero off-diagonal values within its first $q$ queries. To every occurrence of $\mathcal{E}$, we attach a certificate of the form $t_1 < t_2 < \cdots < t_r, v_1, v_2, \ldots, v_r$, where $t_i$ is the query time and $v_i$ is the value of the $i$th nonzero response. There are $\binom{q}{r}$ ways to form the $t_1, \ldots, t_r$ and $2^r$ choices for $v_1, \ldots, v_r$, since the $v_i$ take on values in $\{1/2, 1\}$. This gives at most $2^r\binom{q}{r}$ distinct certificates.

Now, bound the probability of observing any given certificate. We first note that a certificate $t_1 < t_2 < \cdots < t_r, v_1, v_2, \ldots, v_r$ uniquely determines the first $t_r + 1$ queries made by the algorithm, since it is deterministic. Fixing a certificate, we may therefore recover the unordered pairs $\{i_1, j_1\}, \ldots, \{i_r, j_r\}$ queried by the algorithm at times $t_1, \ldots, t_r$. We note the pairs must be pairwise disjoint, since there is exactly one nonzero off-diagonal entry in every row or column of $\bB(\bw)$, and hence $\bA(\bw)$. We bound the probability that a uniformly random permutation $\bP$ sends a nonzero off-diagonal entry to each of $\{i_1, j_1\}, \ldots, \{i_r, j_r\}$ as
\[
    \frac{1}{(n-1)(n-3)\cdots(n-2r+1)}\leq \left(\frac{2}{n}\right)^r
\]
for $r\leq n/4$. Taking a union bound over all possible certificates gives
\[\Pr[\mathcal{E}]\leq 2^r\binom{q}{r}\left(\frac{2}{n}\right)^r\leq\left(\frac{4\mathrm{e}q}{rn}\right)^r\]
where we used the bound $\binom{q}{r}\leq (\mathrm{e}q/r)^r$. In particular, if $r = \Omega(1/\epsilon^2)$ and $\Pr[\mathcal{E}] = \Omega(1)$, then $q = \Omega(n/\epsilon^2)$, completing the argument.
\end{proof}

\section{Applications}
\label{sec:applications}
We next present several applications of our sublinear time length-squared sampling and Frobenius norm estimation algorithms. In particular, we discuss results on eigenvalue approximation for psd matrices (\Cref{sec:eigenvalue-approximation}), additive error Frobenius norm low-rank approximation for psd matrices (\Cref{sec:additive-lra}), and robust psd low-rank approximation (\Cref{sec:robust-lra}).


\subsection{Eigenvalue Approximation}
\label{sec:eigenvalue-approximation}
We first give a sublinear time algorithm to compute additive-error approximations to all eigenvalues of a psd matrix. Formally, we target the following form of approximation, studied in \cite{bhattacharjee_sublinear_2022,swartworth_tight_2024}: 
\begin{definition}[Additive-error eigenvalue approximation]
\label{def:additive-eigenvalue-approximation}
Let \(\bA\in\R^{n\times n}\) be symmetric with eigenvalues $\lambda_1(\bv A) \geq \ldots \ge \lambda_n(\bv A)$. We say that a sequence
\(
    \widehat{\lambda}_1
    \geq
    \widehat{\lambda}_2
    \geq
    \cdots
    \geq
    \widehat{\lambda}_n
\)
is an additive \(\gamma\)-approximation to the spectrum of \(\bA\) if
\[
    |\widehat{\lambda}_i-\lambda_i(\bA)|
    \leq
    \gamma
    \qquad
    \text{for all } i\in[n].
\]
\end{definition}
%
Our algorithm will compute an additive $(\varepsilon \fronorm{\bA})$-eigenvalue approximation.
This is a strong guarantee in that we must provide an approximation to each eigenvalue of $\bA$. However, one can observe that at most $\varepsilon^{-2}$ eigenvalues can be larger than $\epsilon \fronorm{\bA}$, so an additive-error eigenvalue algorithm is free to return zero as an approximation to most of the spectrum. 

We will use the basic algorithmic template of~\cite{bhattacharjee_sublinear_2022}, which was refined in~\cite{swartworth_tight_2024}. These papers study two related algorithms. The first samples and rescales a $\tilde O(1/\epsilon^2) \times \tilde O(1/\epsilon^2)$ principal submatrix uniformly at random and returns the eigenvalues of the submatrix padded with additional zeros. \cite{bhattacharjee_sublinear_2022,swartworth_tight_2024} show that this method achieves an $\epsilon n \|\bv A \|_{\infty}$-eigenvalue approximation, where $\norm{\bv A}_\infty$ is the maximum magnitude of an entry of $\bv A$. For the second approach, they assume that they can perform length-squared sampling of rows/columns of $\bv A$ and have knowledge of $\|\bv A \|_F$ and all column norms. In this case, after sampling and rescaling a $\tilde O(\log^4 n/\epsilon^2) \times \tilde O(\log^4 n/\epsilon^2)$ principal submatrix using length-squared sampling, and zeroing out some entries to control variance (which requires knowledge of the Frobenius norm and column norms), they again return the eigenvalues of the zeroed-out matrix along with additional zeros. This method is shown to give an  $\epsilon  \|\bv A \|_{F}$-eigenvalue approximation. Note that $\|\bv A \|_{F} \leq  n \|\bv A \|_{\infty} $, and in fact, $\|\bv A \|_{F}$ is usually much smaller than $n \|\bv A \|_{\infty} $. So the length-squared sampling approach can be much stronger than the uniform sampling approach.

In the special case that $\bv A$ is psd, we can apply  Algorithms~\ref{alg:sampling-procedure} and~\ref{alg:frobenius-norm-approximation} to implement the length-squared sampling approach of \cite{swartworth_tight_2024} with total sample complexity $\tilde O(n/\epsilon^2)$. The pseudocode for our implementation is given in~\cref{alg:sublinear-psd-eigenvalue} in~\cref{sec:eigenvalue-proof}. Our implementation and analysis require several small tweaks. First, ~\cite{swartworth_tight_2024} requires knowledge of $\| \bv A\|_F^2$ for their zeroing-out step; we show that a $(1 \pm \epsilon)$-approximation to $\| \bv A\|_F^2$ computed via \Cref{alg:frobenius-norm-approximation} suffices. Moreover, while we can sample indices from the length-squared distribution, we can only rescale them approximately, since we don't know the exact normalizing factor $\norm{\bv A}_F^2$ of the sampling probabilities. Again, we show that a $(1\pm \epsilon)$ approximation suffices. Finally, in Algorithm 2 of~\cite{swartworth_tight_2024} they independently sample each row $\textrm{Binomial} \left(m, \frac{\|\bv A_{i,*} \|_2^2}{\| \bv A\|_F^2} \right)$ many times. It is not clear how to sample from this distribution using repeated independent length-squared samples, which \Cref{alg:sampling-procedure} provides. We thus use a standard Poissonization trick: we generate our number of samples $K$ according to the $\textrm{Poisson}(m)$ distribution, and then sample $K$ times independently from the length-squared distribution. This is equivalent to sampling each row independently $\textrm{Poisson}\left ( \frac{m\|\bv A_{i,*} \|_2^2}{\| \bv A\|_F^2} \right )$ times, which suffices. Such a sampling scheme was first proposed and analyzed in~\cite{RB26}, where they extended the sampling algorithms of \cite{bhattacharjee_sublinear_2022,swartworth_tight_2024} to also approximate eigenvectors. 
We now state the main result of this section, with the proof and details deferred to~\cref{sec:eigenvalue-proof}.

\begin{theorem}[Eigenvalue approximation]
\label{thm:sublinear-psd-eigenvalue}
Given any psd matrix \(\bA\in\R^{n\times n}\) and accuracy parameter $\epsilon \in (0,1)$, \cref{alg:sublinear-psd-eigenvalue} returns an \((\epsilon\|\bA\|_F)\)-additive approximation to the spectrum of \(\bA\) with probability at least \(2/3\). The algorithm queries
\[
    O\!\left(
        \frac{n}{\epsilon^2}
        \log^4 n
        \log^2\frac{1}{\epsilon}
    \right)
\]
entries of \(\bA\) in expectation, and uses
\[
    O\!\left(
        \frac{n}{\epsilon^2}
        \log^4 n
        \log^2\frac{1}{\epsilon}
        +
        \frac{1}{\epsilon^{2\omega}}
        \log^{4 \omega} n
        \log^{2\omega}\frac{1}{\epsilon}
    \right)
\]
arithmetic operations in expectation, where $\omega$ is the exponent of fast matrix multiplication.
\end{theorem}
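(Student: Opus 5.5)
The plan is to reduce to the length-squared-sampling eigenvalue algorithm of \cite{swartworth_tight_2024}, in its Poissonized form from \cite{RB26}. We treat that analysis as a black box whose inputs are (i) the sampled row multiplicities, (ii) the rescaling factors $p_i = m\|\bA_{i,*}\|^2/\fronorm{\bA}^2$, and (iii) the zeroing-out thresholds, which depend on $\fronorm{\bA}$ and the column norms. We then show that each input can be supplied, exactly or approximately, by \Cref{alg:sampling-procedure} and \Cref{alg:frobenius-norm-approximation}, and that the black-box guarantee survives constant-factor (indeed $(1\pm\epsilon)$) perturbations of the quantities we cannot compute exactly.

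\emph{Sampling step.} Fix $m = \Theta(\log^4 n \log^2(1/\epsilon)/\epsilon^2)$, as in \cite{swartworth_tight_2024}. Draw $K\sim\Poisson(m)$ and call \Cref{alg:sampling-procedure} $K$ times. By Poisson thinning, the multiplicity of index $i$ is an independent $\Poisson(p_i)$ variable, which is exactly the model analyzed in \cite{RB26}. Since $\E[K]=m$ and each call costs $2n$ queries in expectation by \Cref{thm:fast-l2-sampling}, this step uses $O(mn)$ expected queries. The diagonal only needs to be read once, so the actual cost is $n + O(m)$, but $O(mn)$ is already within the stated bound. In the same pass we run \Cref{alg:frobenius-norm-approximation} with constant $\delta$ to obtain $\widehat F \in (1\pm\epsilon)\fronorm{\bA}^2$; this costs $O(n/\epsilon^2)$.

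\emph{Approximate rescaling and zeroing.} For each sampled index we compute $\|\bA_{i,*}\|^2$ exactly by reading the row, which costs $n$ queries per distinct sampled index and $O(mn)$ in total. We then set $\widehat p_i = m\|\bA_{i,*}\|^2/\widehat F$, so that $\widehat p_i = (1\pm\epsilon)p_i$. Next we form the rescaled principal submatrix $\bS^\top\bA\bS$ with weights $1/\sqrt{\widehat p_i\widehat p_j}$, reading $O(m^2)$ further entries, and apply the zeroing rule of \cite{swartworth_tight_2024} with $\widehat F$ in place of $\fronorm{\bA}^2$. Finally we take eigenvalues, padding with zeros; this costs $O(m^\omega)$ arithmetic operations, which gives the second term in the bound. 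Note that $m^2 \le mn$ whenever $m\le n$; otherwise we simply read all of $\bA$ and compute the eigenvalues exactly.

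The main obstacle is verifying robustness to the approximate normalization. We will write the computed matrix as $\bD\,\bM\,\bD$, where $\bM$ is the matrix the exact analysis would produce and $\bD$ is a diagonal matrix with entries in $[1-\epsilon,1+\epsilon]$ (the common factor $(\fronorm{\bA}^2/\widehat F)^{1/2}$). By Ostrowski's theorem, $\lambda_i(\bD\bM\bD)$ lies within a factor $(1\pm\epsilon)^2$ of $\lambda_i(\bM)$. We have $|\lambda_i(\bM)|\le \|\bA\|_2+\epsilon\fronorm{\bA} \le 2\fronorm{\bA}$, so the additional error is $O(\epsilon)\fronorm{\bA}$. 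The zeroing threshold is shifted by at most a constant factor, and we must check that the variance bounds in \cite{swartworth_tight_2024} tolerate a constant-factor change in the threshold, since their argument only uses inequalities between the threshold and the true norms. Rescaling $\epsilon$ by a constant and taking a union bound over the failure events (constant probability each) yields overall success probability $2/3$.
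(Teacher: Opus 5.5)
\paragraph{Gap: the truncation step.} Most of your outline matches the paper: Poissonized calls to \cref{alg:sampling-procedure}, an estimate $\widehat F$ from \cref{alg:frobenius-norm-approximation}, exact row norms, a common scalar rescaling, and the cost accounting. The truncation step, however, has a real gap.

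You write the computed matrix as $\bv{D}\bv{M}\bv{D}$, with $\bv{M}$ the matrix of the exact analysis. That is false as stated. Your matrix is zeroed out using the threshold with $\widehat F$, while $\bv{M}$ is zeroed out using $\fronorm{\bA}^2$. The two can therefore have different zero patterns, and no diagonal congruence relates them.

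You then defer this discrepancy to checking that the variance bounds of \cite{swartworth_tight_2024} tolerate a shifted threshold. That means opening the black box, and it is not a formality. A smaller $\widehat F$ zeroes fewer entries, keeping rescaled entries with large variance. A larger one zeroes more entries, adding bias. The cited lemma is stated for one fixed threshold rule, not a perturbed one, so as written the proof is incomplete at exactly this step.

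\paragraph{The missing observation.} The accuracy parameter and the Frobenius-norm value enter both zeroing conditions only through the product $\epsilon^2 F$. Put $\theta = \sqrt{\widehat F}/\fronorm{\bA}$ and $\epsilon' = \epsilon\theta$. Then $\epsilon^2\widehat F = (\epsilon')^2\fronorm{\bA}^2$, so your truncation is exactly the exact-norm truncation at accuracy $\epsilon'$.

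Since $\theta^2\in[1-\epsilon,1+\epsilon]$, $\epsilon'$ is within a constant factor of $\epsilon$, and it is below $1$ once $\epsilon$ is capped at a constant. Your $m$ therefore still meets the sample-size requirement at accuracy $\epsilon'$.

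Condition on $\widehat F$. This requires $\widehat F$ to use randomness independent of the $K$ samples, so ``in the same pass'' should not mean reusing the rejection rounds. The computed matrix then equals $\theta^2\bv{M}_{\epsilon'}$ exactly, and your scalar-rescaling estimate finishes the argument:
$|\theta^2\nu_i-\lambda_i(\bA)|\le\theta^3\epsilon\fronorm{\bA}+|\theta^2-1|\fronorm{\bA}=O(\epsilon)\fronorm{\bA}$.
This is precisely the content of \cref{lem:sw-row-norm-eigenvalue}.

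\paragraph{A smaller error.} Your aside that the sampling costs only $n+O(m)$ queries is wrong. Sharing the diagonal read saves only $n$ per call. Each call still needs $\tr(\bA)^2/\fronorm{\bA}^2$ rejection rounds in expectation, which equals $n$ for $\bA=\bI$; this is consistent with the lower bound of \cref{thm:optimality}. It is harmless here, since you fall back on the $O(mn)$ bound.
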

We note that the success probability of \Cref{thm:sublinear-psd-eigenvalue} can easily be boosted via the median trick. 

\subsection{Additive-Error Low-Rank Approximation}
\label{sec:additive-lra}
In this section, we derive a simple sublinear time algorithm for computing an additive error low-rank approximation to a psd matrix $ \bv A \in \R^{n \times n}$ using our fast length-squared sampling procedure. It is well known that length-squared sampling can be used to compute additive error low-rank approximations \cite{frieze_fast_2004,drineas_fast_2006-1,bakshi_robust_2020}. We apply a result of Drineas, Kannan, and Mahoney~\cite{drineas_fast_2006-1} in particular, which lets us compute an approximate basis for the top \(k\)-dimensional left singular subspace of \(\bA\) from \(O(k/\epsilon^2)\) columns sampled via the length-squared distribution. We then use a well-known approximate matrix multiplication result to approximately project \(\bA\) onto this subspace. Our algorithm outputs \(\bX, \bY\in\R^{n\times k}\) satisfying
\[\fronorm{\bA - \bX\bY^\top}^2\leq \fronorm{\bA - \bA_k}^2 + \epsilon\fronorm{\bA}^2\]
with constant probability, using \(O(nk/\epsilon^2)\) entry queries to \(\bA\) and \(O(n(k/\epsilon^2)^{\omega-1})\) arithmetic operations in expectation, where \(\omega\) is the exponent of fast matrix multiplication. 

Our sample complexity incurs an additional \(1/\epsilon\) factor in comparison to the best known algorithm for low-rank approximation: Bakshi, Chepurko, and Woodruff~\cite[Theorem 4.1]{bakshi_robust_2020} achieve a relative error low-rank approximation guarantee of $(1+\epsilon)\fronorm{\bv A - \bv A_k}^2$ using only \(\widetilde{O}(nk/\epsilon)\) entries, matching a \(\Omega(nk/\epsilon)\) query lower bound of Musco and Woodruff~\cite{musco_sublinear_2019} up to polylogarithmic factors. While their relative error guarantee is a strictly stronger result, the algorithm presented here is extremely simple and admits a simple analysis. We pose as an open question whether it is enough to sample only \(O(k/\epsilon)\) columns by squared column norms, thus leading to an \(O(nk/\epsilon)\) algorithm. The comparison is summarized in \cref{tab:additive-lra-comparison}.

\begin{table}[ht!]
\centering
\small
\renewcommand{\arraystretch}{1.25}
\resizebox{\textwidth}{!}{%
\begin{tabular}{@{}L{0.15\textwidth}L{0.23\textwidth}L{0.35\textwidth}C{0.2\textwidth}C{0.2\textwidth}@{}}
\toprule
Result
&
Assumptions
&
Guarantee
&
Query complexity
\\
\midrule
\cite{bakshi_robust_2020}, Theorem 4.1
&
Psd matrix with entry-wise access.
&
\(\bX,\bY\in\R^{n\times k}\) such that
\[
    \fronorm{\bA-\bX\bY^\top}^2
    \leq
    (1+\epsilon)\fronorm{\bA-\bA_k}^2.
\]
&
\[
    \widetilde{O}\!\left(
        \frac{nk}{\epsilon}
    \right)
\]
\\
\midrule
This work
&
Psd matrix with entry-wise access.
&
\(\bX,\bY\in\R^{n\times k}\) such that
\[
    \fronorm{\bA-\bX\bY^\top}^2
    \leq
    \fronorm{\bA-\bA_k}^2
    +
    \epsilon\fronorm{\bA}^2 .
\]
&
\[
    O\!\left(
        \frac{nk}{\epsilon^2}
    \right)
\]
\\
\bottomrule
\end{tabular}
}
\caption{Comparison of sublinear-time low-rank approximation guarantees for
psd matrices.}
\label{tab:additive-lra-comparison}
\end{table}

We begin with the following theorem due to Drineas, Kannan, and Mahoney~\cite{drineas_fast_2006-1}, which shows that one can compute a good approximation to the top \(k\)-dimensional left singular subspace of \(\bA\) from \(O(k/\epsilon^2)\) columns sampled by squared column norms.
Here is a simplified version of their result.

\begin{lemma}[Additive-error low-rank approximation: General matrix; \cite{drineas_fast_2006-1}, Theorem~4]
\label{thm:dkm06}
    Let \(\bA\in\R^{m\times n}\) be a nonzero matrix, let $k \leq n$ be a rank parameter, and let $0 \le \epsilon \le 1$ and $0<\delta \le 1/\mathrm{e}$ be error and failure-probability parameters.
    Suppose $I_1,\ldots,I_c$ are i.i.d.\ samples from the length-squared sampling distribution \cref{eq:length-squared}, and let $\bX \in \R^{m\times k}$ be the top-$k$ singular vectors of the column-normalized submatrix 
    \begin{equation*}
        \begin{bmatrix}
            \dfrac{\bA_{*,I_1}}{\norm{\bA_{*,I_1}}} & \cdots & \dfrac{\bA_{*,I_c}}{\norm{\bA_{*,I_c}}}
        \end{bmatrix} \in \R^{m\times c}.
    \end{equation*}
    Then if $c \ge 72\log(1/\delta) \cdot k/\epsilon^2$, we have 
    $\fronorm{\bA - \bX\bX^\top\bA}^2\leq \fronorm{\bA - \bA_k}^2 + \epsilon \fronorm{\bA}^2$ with probability $\ge 1-\delta.$
\end{lemma}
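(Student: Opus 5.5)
This is the classical Drineas--Kannan--Mahoney argument. The plan is to compare $\bA\bA^\top$ with the sketch $\bS\bS^\top$, where $\bS$ is the column-normalized sampled submatrix rescaled by a single common factor, and then transfer the top-$k$ subspace of $\bS$ to $\bA$.

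\textbf{Step 1 (the sketch is unbiased).} Set $\bS = \frac{\fronorm{\bA}}{\sqrt c}\,[\bA_{*,I_1}/\norm{\bA_{*,I_1}},\ldots,\bA_{*,I_c}/\norm{\bA_{*,I_c}}]$. Since $p_i = \norm{\bA_{*,i}}^2/\fronorm{\bA}^2$, each column equals $\bA_{*,I_t}/\sqrt{c\,p_{I_t}}$. This is exactly the standard rescaled length-squared sample, and the common scalar factor does not change the left singular vectors, so $\bX$ is also the top-$k$ left singular basis of $\bS$. Moreover $\E[\bS\bS^\top]=\bA\bA^\top$.

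\textbf{Step 2 (deterministic perturbation bound).} I would show
\[
  \fronorm{\bA-\bX\bX^\top\bA}^2 \le \fronorm{\bA-\bA_k}^2 + 2\sqrt{k}\,\fronorm{\bA\bA^\top-\bS\bS^\top}.
\]
Write $\fronorm{\bA-\bX\bX^\top\bA}^2=\fronorm{\bA}^2-\fronorm{\bX^\top\bA}^2$. Then
\[
  \bigl|\fronorm{\bX^\top\bA}^2-\fronorm{\bX^\top\bS}^2\bigr| = \bigl|\tr\bigl(\bX^\top(\bA\bA^\top-\bS\bS^\top)\bX\bigr)\bigr| \le \sqrt k\,\fronorm{\bA\bA^\top-\bS\bS^\top}
\]
by Cauchy--Schwarz against the rank-$k$ projector. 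Next, $\fronorm{\bX^\top\bS}^2=\sum_{t\le k}\sigma_t^2(\bS)$, and this is compared to $\sum_{t\le k}\sigma_t^2(\bA)$ via Hoffman--Wielandt (Weyl) applied to $\bS\bS^\top$ versus $\bA\bA^\top$, again losing $\sqrt k\,\fronorm{\bA\bA^\top-\bS\bS^\top}$. Combining the two estimates gives the displayed bound.

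\textbf{Step 3 (probabilistic matrix multiplication bound).} Here $\bS\bS^\top$ is a sum of $c$ i.i.d.\ rank-one terms $\frac{\fronorm{\bA}^2}{c}\,\mathbf{u}_t\mathbf{u}_t^\top$ with unit vectors $\mathbf{u}_t$. The variance computation gives
\[
  \E\fronorm{\bA\bA^\top-\bS\bS^\top}^2 \le \fronorm{\bA}^4/c .
\]
For failure probability $\delta$, the plan is to apply McDiarmid's bounded-differences inequality to $f=\fronorm{\bA\bA^\top-\bS\bS^\top}$. Changing one sample moves $f$ by at most $2\fronorm{\bA}^2/c$, which yields
\[
  f \le \frac{\fronorm{\bA}^2}{\sqrt c}\Bigl(1+\sqrt{8\log(1/\delta)}\Bigr)
\]
with probability at least $1-\delta$. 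Plugging this into Step 2 and using $c\ge 72\log(1/\delta)\,k/\epsilon^2$ with $\log(1/\delta)\ge1$ gives
\[
  2\sqrt k\, f \le 2\epsilon\,\frac{1+\sqrt{8\log(1/\delta)}}{\sqrt{72\log(1/\delta)}}\,\fronorm{\bA}^2 \le \epsilon\fronorm{\bA}^2 .
\]
The last inequality needs $1+\sqrt8\le3$, which fails slightly ($1+\sqrt8\approx3.83$). So the main obstacle is the constant bookkeeping in this step. I would tighten either the bounded-difference constant (a sharper differences estimate, or a Bernstein-type bound for the sum) or the factor $2$ in Step 2 (by using $\sqrt k$ only once via a single Ky Fan argument) so that the stated constant $72$ suffices. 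The structural argument itself is routine.
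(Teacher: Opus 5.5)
Your argument is the standard Drineas--Kannan--Mahoney proof, which the paper relies on by citation and does not reprove. It combines the deterministic bound $\fronorm{\bA-\bX\bX^\top\bA}^2\le\fronorm{\bA-\bA_k}^2+2\sqrt{k}\,\fronorm{\bA\bA^\top-\bS\bS^\top}$ with the approximate-multiplication tail bound with $\eta=1+\sqrt{8\log(1/\delta)}$. Steps 1--3 are correct. The ``obstacle'' you flag at the end is an arithmetic slip, not a gap.

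With $L=\log(1/\delta)$, the inequality you need is $2(1+\sqrt{8L})\le\sqrt{72L}=6\sqrt{2L}$, i.e.
\[
  \frac{1+\sqrt{8L}}{3\sqrt{2L}}=\frac{2}{3}+\frac{1}{3\sqrt{2L}}\le 1,
\]
which holds for every $L\ge 1/2$, in particular whenever $\delta\le 1/\mathrm{e}$. At $L=1$ it reads $1+\sqrt8\approx 3.83\le 3\sqrt2\approx 4.24$; you replaced $\sqrt{72}/2=3\sqrt2$ by $3$. So no sharpening of the bounded-difference constant or of the factor $2$ is needed, and your proof closes as written. This is also where the constant $72$ comes from: DKM require $c\ge 4k\eta^2/\epsilon^2$, and $4(1+\sqrt{8L})^2\le 4(2+16L)\le 72L$ once $L\ge 1$, which is why the lemma assumes $\delta\le 1/\mathrm{e}$.

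Two minor points. First, McDiarmid with your difference bound $2\fronorm{\bA}^2/c$ actually gives deviation $\fronorm{\bA}^2\sqrt{2L/c}$; the $\sqrt{8L}$ is the Azuma-form constant DKM use, so you have extra slack. Second, in Step 2 the loss $\sqrt{k}\,\fronorm{\bA\bA^\top-\bS\bS^\top}$ comes from Hoffman--Wielandt followed by Cauchy--Schwarz over the top $k$ eigenvalues (or from Ky Fan). It does not come from Weyl, which would give $k\norm{\bA\bA^\top-\bS\bS^\top}_2$ instead.
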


Lemma~\ref{thm:dkm06} produces a matrix \(\bX \in \R^{m \times k}\) with orthonormal columns such that \(\bX\bX^\top\bA\), the projection of \(\bA\) onto the subspace spanned by \(\bX\), yields an additive-error rank-$k$ approximation to \(\bA\). To produce an explicit factorization, we would like to use \(\bX\) as the left factor and set \(\bY_{\rm opt}^\top = \bX^\top \bA\) so that \(\bX\bY_{\rm opt}^\top = \bv X \bv X^\top \bv A\). However, computing \(\bX^\top\bA\) requires us to read all of \(\bA\), so we instead apply a standard approximate matrix multiplication result to approximate \(\bX^\top \bA\). We state the needed guarantee below, which is due to Drineas, Kannan, and Mahoney~\cite{drineas_fast_2006}, and provide a proof in \Cref{app:probabilistic-right-factor-proof} for completeness.

\begin{lemma}[Approximate matrix multiplication; \cite{drineas_fast_2006}]
\label{lem:probabilistic-right-factor}
Let \(\bA\in\R^{n\times n}\) be a matrix, let
\(\bX\in\R^{n\times k}\) have orthonormal columns, and let $I_1,\ldots,I_t$ be i.i.d.\ samples from the leverage score (i.e., squared row-norm) distribution of $\bX$:
\begin{equation*}
    q_i \coloneqq \frac{\|\bX_{i,*}\|_2^2}{k} \quad \text{for } i = 1,\ldots, n.
\end{equation*}
Then for every $\delta\in(0,1)$, the matrix
\[
    \bY^\top  
    \coloneqq
    \frac{1}{t}\sum_{r=1}^t
    \frac{\bX_{I_r,*}^{\top}\bA_{I_r,*}}{q_{I_r}}
\]
satisfies the bound 
\[
    \fronorm{\bA-\bX\bY^\top}^2
    \leq
    \fronorm{\bA-\bX\bX^\top\bA}^2
    +
    \frac{k}{\delta t}\fronorm{\bA}^2 \quad\text{with probability }\ge 1 - \delta.
\]
\end{lemma}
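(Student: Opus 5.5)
Since $\bX\bX^\top$ is an orthogonal projection, we can split the error by the Pythagorean theorem. The residual $\bA - \bX\bX^\top\bA$ has columns orthogonal to $\mathrm{range}(\bX)$, while $\bX(\bX^\top\bA - \bY^\top)$ lies in $\mathrm{range}(\bX)$. Hence
\[
    \fronorm{\bA-\bX\bY^\top}^2 = \fronorm{\bA-\bX\bX^\top\bA}^2 + \fronorm{\bX(\bX^\top\bA - \bY^\top)}^2 = \fronorm{\bA-\bX\bX^\top\bA}^2 + \fronorm{\bX^\top\bA - \bY^\top}^2,
\]
where the last equality uses that $\bX$ has orthonormal columns. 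It therefore suffices to show $\fronorm{\bX^\top\bA - \bY^\top}^2 \le \frac{k}{\delta t}\fronorm{\bA}^2$ with probability at least $1-\delta$. This is an approximate matrix multiplication error for the product $\bX^\top\bA = \sum_i \bX_{i,*}^\top \bA_{i,*}$.

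We bound this error in expectation and then apply Markov's inequality. Define $\bZ_r \coloneqq \bX_{I_r,*}^\top\bA_{I_r,*}/q_{I_r}$, noting that every index $i$ with $\bX_{i,*}\neq 0$ has $q_i>0$. Then
\[
    \E[\bZ_r] = \sum_i q_i \cdot \bX_{i,*}^\top\bA_{i,*}/q_i = \bX^\top\bA,
\]
so $\bY^\top$ is an average of $t$ i.i.d.\ unbiased estimators. By independence,
\[
    \E\fronorm{\bY^\top - \bX^\top\bA}^2 = \tfrac{1}{t}\bigl(\E\fronorm{\bZ_1}^2 - \fronorm{\bX^\top\bA}^2\bigr) \le \tfrac1t \E\fronorm{\bZ_1}^2.
\]
The Frobenius norm of a rank-one matrix is $\fronorm{\bv u^\top \bv v} = \norm{\bv u}\norm{\bv v}$, so
\[
    \E\fronorm{\bZ_1}^2 = \sum_i q_i \frac{\norm{\bX_{i,*}}^2\norm{\bA_{i,*}}^2}{q_i^2} = \sum_i k\norm{\bA_{i,*}}^2 = k\fronorm{\bA}^2,
\]
using $\norm{\bX_{i,*}}^2/q_i = k$. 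Here the sum runs over rows with $q_i>0$, and the omitted rows have $\bX_{i,*}=0$, so they contribute zero to the product anyway.

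Markov's inequality applied to $\fronorm{\bY^\top-\bX^\top\bA}^2$ then gives the bound $\frac{k}{\delta t}\fronorm{\bA}^2$ with probability at least $1-\delta$. Combined with the Pythagorean identity, this yields the claim. No step is a real obstacle. The only points needing care are the orthogonality decomposition, which requires $\bX$ to have orthonormal columns, and handling zero-probability rows in the unbiasedness computation.
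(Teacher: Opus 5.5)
Your proposal is correct and follows essentially the same argument as the paper: unbiasedness of the rank-one estimators, the second-moment bound $\E\fronorm{\mathbf{Z}_1}^2 \le k\fronorm{\bA}^2$, Markov's inequality, and the Pythagorean split using orthonormality of $\bX$ (the paper does this split last rather than first). One cosmetic point: the sum only runs over rows with $q_i>0$, so the second moment is $k\sum_{i:q_i>0}\norm{\bA_{i,*}}^2 \le k\fronorm{\bA}^2$, an inequality rather than the equality you wrote, and the inequality is all you need.
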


Combining length-squared sampling to form the left factor $\bX$ (\cref{thm:dkm06}) with approximate matrix multiplication to form the right factor $\bY^\top$ (\cref{lem:probabilistic-right-factor}) yields \cref{alg:sublinear-psd-low-rank} for additive-error sublinear psd low-rank approximation.
Its performance is quantified by the following theorem.

\begin{algorithm}[t]
\caption{Sublinear Psd Low-Rank Approximation} \label{alg:sublinear-psd-low-rank}
\begin{algorithmic}[1]
\Require Psd matrix $\bA \in \R^{n\times n}$, target rank $k \in [n]$, and accuracy parameter $\epsilon \in (0,1)$.
\Ensure Matrices $\bX \in \R^{n\times k}$ and $\bY \in \R^{n\times k}$ such that $\bX\bY^\top \approx \bA$.
\State Set $s = \lceil Ck/\epsilon^2 \rceil$ and $t = \lceil C'k/\epsilon \rceil$ for sufficiently large absolute constants $C, C' > 0$.
\State Run \cref{alg:sampling-procedure} $s$ times to sample i.i.d.\ indices $I_1,\ldots,I_s$ from the squared-length distribution \label{line:length-squared-sample}
\State Read the full columns $\bA_{*,I_1},\ldots,\bA_{*,I_s}$ and form $\bB \coloneqq \begin{bmatrix}
    \bA_{*,I_1}/\norm{\bA_{*,I_1}} & \cdots & \bA_{*,I_s}/\norm{\bA_{*,I_s}}
\end{bmatrix} \in \R^{n\times s}$
\State Compute the top-$k$ left singular vectors $\bX \in \R^{n\times k}$ of $\bB$
\State Let $\mathcal{S}(i) = \norm{\bX_{i,*}}^2/k$ be the leverage score distribution on $\bX$ and sample $J_1,\ldots,J_t \overset{\mathrm{i.i.d.}}{\sim} \mathcal{S}$.
\State Query the full rows $\bA_{J_1,*},\ldots,\bA_{J_t,*}$ and set
\(
    \bY \coloneqq \frac{1}{t}\sum_{r=1}^{t} \frac{\bA_{J_r,*}^{\top}\bX_{J_r,*}}{\norm{\bX_{J_r,*}}^2/k} \in \R^{n\times k}
\) 
\State \Return $\bX$, $\bY$
\end{algorithmic}
\end{algorithm}

\begin{theorem}[Additive-error low-rank approximation: Psd matrix]
\label{thm:sublinear-psd-low-rank}
Let \(\bA\in\R^{n\times n}\) be a nonzero psd matrix, let
\(1\leq k\leq n\), and let \(\epsilon\in(0,1)\).
\Cref{alg:sublinear-psd-low-rank} returns matrices \(\bX,\bY\in\R^{n\times k}\) satisfying
\[
    \fronorm{\bA-\bX\bY^\top}^2
    \leq
    \fronorm{\bA-\bA_k}^2+\epsilon\fronorm{\bA}^2
\]
with probability at least \(2/3\). The algorithm queries \(O(nk/\epsilon^2)\) entries of \(\bA\) and uses \(O(n(k/\epsilon^2)^{\omega-1})\) arithmetic operations in expectation.
\end{theorem}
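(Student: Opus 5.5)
The proof combines \cref{thm:dkm06} for the left factor with \cref{lem:probabilistic-right-factor} for the right factor, taking a union bound over the two failure events.

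\emph{Left factor.} By \cref{thm:fast-l2-sampling}, each call to \cref{alg:sampling-procedure} in line~\ref{line:length-squared-sample} returns an index $I$ distributed exactly as $\|\bA_{*,I}\|^2/\fronorm{\bA}^2$. Independent calls give i.i.d.\ samples $I_1,\ldots,I_s$. Apply \cref{thm:dkm06} with error parameter $\epsilon/2$ and failure probability $\delta=1/6$; it suffices that $s \ge 72\log(6)\cdot 4k/\epsilon^2$, which holds once $C$ is large enough. Then with probability at least $5/6$,
\[
\fronorm{\bA-\bX\bX^\top\bA}^2 \le \fronorm{\bA-\bA_k}^2+\tfrac{\epsilon}{2}\fronorm{\bA}^2 .
\]
The sampled columns have nonzero norm almost surely, since a column of zero norm has sampling probability $0$, so the normalization in line~3 is well defined.

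\emph{Right factor.} Condition on $\bX$. The matrix $\bY$ in line~6 satisfies
\[
\bY^\top=\frac1t\sum_r \bX_{J_r,*}^\top\bA_{J_r,*}/q_{J_r},
\]
which is exactly the estimator of \cref{lem:probabilistic-right-factor}. The lemma holds for any orthonormal $\bX$, so conditioning on $\bX$ is harmless. Apply it with $\delta=1/6$: with probability at least $5/6$ the extra error is at most
\[
\frac{6k}{t}\fronorm{\bA}^2 \le \frac{\epsilon}{2}\fronorm{\bA}^2,
\]
provided $C'\ge 12$. A union bound then gives total failure probability at most $1/3$, and the two error terms add to $\epsilon\fronorm{\bA}^2$. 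There is one subtlety: \cref{thm:dkm06} as stated uses the top-$k$ singular vectors of $\bB$. If $\mathrm{rank}(\bB)<k$, I would pad $\bX$ with arbitrary orthonormal columns. This only enlarges the projection subspace, so the left-factor bound still holds.

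\emph{Costs.} Line~\ref{line:length-squared-sample} calls \cref{alg:sampling-procedure} $s$ times. The diagonal can be read once and reused across calls, though even rereading it each time costs only $O(n)$ queries per call. Each call uses $O(n)$ queries and operations in expectation, for $O(ns)=O(nk/\epsilon^2)$ in total. Reading $s$ columns in line~3 costs $ns$ queries, and reading $t$ rows in line~6 costs $nt=O(nk/\epsilon)$ queries. For arithmetic, computing the top-$k$ left singular vectors of the $n\times s$ matrix $\bB$ is the dominant step: form $\bB^\top\bB$ in $O(n s^{\omega-1})$ time via fast rectangular multiplication in blocks of $s\times s$, then eigendecompose it in $O(s^\omega)$ time, and recover $\bX=\bB\bV\Sigma^{-1}$ in $O(nsk)$ time. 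Computing the leverage scores costs $O(nk)$, sampling the $J_r$ costs $O(n+t)$, and forming $\bY$ costs $O(tnk)=O(nk^2/\epsilon)$. All of these are within $O(n(k/\epsilon^2)^{\omega-1})$ since $s\ge k$.

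The main obstacle is the SVD step. Exact computation of singular vectors is not literally finite-time in the arithmetic model, so I would treat the eigendecomposition of the $s\times s$ Gram matrix as an $O(s^\omega)$ primitive, as is standard. One must also check that $O(s^\omega)\le O(ns^{\omega-1})$, which requires $s\le n$. If instead $s>n$, the algorithm can simply read all of $\bA$ in $n^2\le ns$ queries, so the stated bounds still hold.
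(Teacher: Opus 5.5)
Your proposal is correct and follows the paper's proof essentially step for step. The left factor uses \cref{thm:dkm06} with accuracy $\epsilon/2$ and $\delta=1/6$ (the paper fixes $C=517\ge 288\log 6$), the right factor uses \cref{lem:probabilistic-right-factor} with $\delta=1/6$ and $C'=12$, the two are combined by a union bound, and the arithmetic cost is dominated by the SVD of $\bB$.

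One small slip in your cost accounting: the $O(nsk)$ step recovering $\bX$ and the $O(ntk)$ step forming $\bY$ are not within $O(ns^{\omega-1})$ merely because $s\ge k$, since that would require $k=O(s^{\omega-2})$. Both are $n\times s$ (resp.\ $n\times t$) by $s\times k$ (resp.\ $t\times k$) products, and blocked fast matrix multiplication performs them in $O(ns^{\omega-1})$ operations, so the stated bound still holds.
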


\begin{proof}
We shall show the guarantee holds by choosing the constants $C = 517$ and $C' = 12$ in \cref{alg:sublinear-psd-low-rank}. 
First note that by choice of $C$, the hypotheses of \cref{thm:dkm06} are satisfied with failure probability $\delta = 1/6$ and accuracy parameter $\epsilon / 2$.
Thus,
\begin{equation} \label{eq:lra-guarantee-1}
    \fronorm{\bA - \bX\bX^\top\bA}^2 \le \fronorm{\bA - \bA_k}^2 + \frac{\epsilon}{2} \fronorm{\bA}^2 \quad \text{with probability } \ge \frac{5}{6}.
\end{equation}
Next, the choice of $C' = 12$ shows that 
\begin{equation} \label{eq:lra-guarantee-2}
    \fronorm{\bA - \bX\bY^\top}^2 \le \fronorm{\bA - \bX\bX^\top\bA}^2 + \frac{k}{(1/6)(12k/\epsilon)}\fronorm{\bA}^2 = \fronorm{\bA - \bX\bX^\top\bA}^2 + \frac{\epsilon}{2}\fronorm{\bA}^2 \quad \text{with probability } \ge \frac{5}{6}.
\end{equation}
Combining \cref{eq:lra-guarantee-1,eq:lra-guarantee-2} yields the desired guarantee.

The algorithm accesses $s+t$ full rows or columns and executes \cref{alg:sampling-procedure} a total of $s$ times, resulting in an expected $O(n(s+t)) = O(nk/\epsilon^2)$ entry accesses.
The runtime is dominated by computing the top-$k$ singular vectors of $\bB$, which requires $O(ns^{\omega-1}) = O(n(k/\epsilon^2)^{\omega-1})$ operations via a fast-matrix-multiplication-accelerated SVD \cite{Demmel_2007}. 
\end{proof}

\subsection{Robust Low-Rank Approximation}
\label{sec:robust-lra}
In the previous section, we used \cref{alg:sampling-procedure} to obtain an additive-error low-rank approximation to a psd matrix in sublinear time. In this section, we show that the same approach is robust to a natural class of entry-wise perturbations. Our starting point is the robust low-rank approximation model of Bakshi, Chepurko, and Woodruff~\cite{bakshi_robust_2020}. In this model, there is an underlying psd matrix \(\bA\), but we are only given entry-wise access to a corrupted matrix
    $\bB = \bA + \bN$,
where \(\bN\) is an arbitrary noise matrix with bounded Frobenius norm. Our goal is to compute a low-rank approximation to \(\bA\).

Bakshi, Chepurko, and Woodruff \cite{bakshi_robust_2020} consider this problem under three assumptions on the corruption. First, they assume that the noise has bounded Frobenius norm. That is, for some $\eta \le 1$, 
\begin{equation} \label{eq:assumption-1}
    \fronorm{\bN}^2 \leq \eta \fronorm{\bA}^2. \tag{A1}
\end{equation}
Second, they assume that the noise is well-spread across the rows. That is, for some constant $c > 0$,
\begin{equation} \label{eq:assumption-2}
    \norm{\bN_{i,*}}^2
    \leq
    c\norm{\bA_{i,*}}^2
    \qquad\text{for every }i\in[n].\tag{A2}
\end{equation} 
Third, they assume access to the diagonal corruption parameter
\begin{equation} \label{eq:assumption-3}
    \phimax
    \coloneqq
    \max\left\{1,\,\max_{i\in[n]} \frac{\bA_{ii}}{|\bB_{ii}|}\right\}.\tag{A3}
\end{equation}
Under these assumptions, their algorithm returns rank-\(k\) approximation factors \(\bX, \bY\in\R^{n\times k}\) satisfying
\begin{align}\label{eq:robustGuarantee}
    \fronorm{\bA-\bX\bY^\top}^2
    \leq
    \fronorm{\bA-\bA_k}^2
    +
    \bigl(\epsilon + O(\sqrt{\eta})\bigr)\fronorm{\bA}^2
\end{align}
with constant probability using
\(
    \widetilde{O}\left({\phimax^2 n k}/{\epsilon}\right)
\)
entry queries to \(\bB\). 

The main result of this section is a simple algorithm that obtains the same guarantee as \eqref{eq:robustGuarantee} under assumptions \cref{eq:assumption-1,eq:assumption-3}. Notably, it does not need the row-wise noise assumption \cref{eq:assumption-2}. Our approach is to modify our sampling algorithm (\Cref{alg:sampling-procedure}) to obtain length-squared samples from \(\bB\) at the cost of \(O(\phimax^2n)\) expected queries per sample. We then apply the additive error low-rank approximation algorithm (\Cref{alg:sublinear-psd-low-rank}) of \cref{sec:additive-lra} to obtain a low-rank approximation to \(\bB\), and thus to \(\bA\). Since by \Cref{thm:sublinear-psd-low-rank}, \Cref{alg:sublinear-psd-low-rank} requires $O(1/\epsilon^2)$ length-squared samples, our algorithm queries
\(
    O\left({\phimax^2 n k}/{\epsilon^2}\right)
\)
entries of \(\bB\) in expectation, so we incur an additional factor of \(1/\epsilon\) in the query complexity compared to \cite{bakshi_robust_2020}. The comparison is summarized in \cref{tab:robust-lra-comparison}.

\begin{table}[t]
\centering
\begin{tabular}{llll}
\toprule
Result
&
Assumptions
&
Guarantee
&
Query complexity

\\
\midrule
\cite{bakshi_robust_2020}
&
\cref{eq:assumption-1}, \cref{eq:assumption-2}, and \cref{eq:assumption-3}
&
\(
\begin{aligned}
    \fronorm{\bA-\bX\bY^\top}^2
    &\leq
    \fronorm{\bA-\bA_k}^2 \\
    &\quad+
    \bigl(\epsilon+O(\sqrt{\eta})\bigr)\fronorm{\bA}^2
\end{aligned}
\)
&
\(
    \widetilde{O}\!\left(
        \dfrac{\phimax^2 n k}{\epsilon}
    \right)
\)

\\
\midrule
This work
&
\cref{eq:assumption-1,eq:assumption-3}
&
\(
\begin{aligned}
    \fronorm{\bA-\bX\bY^\top}^2
    &\leq
    \fronorm{\bA-\bA_k}^2 \\
    &\quad+
    (\epsilon+O(\sqrt{\eta}))\fronorm{\bA}^2
\end{aligned}
\)
&
\(
    O\!\left(
        \dfrac{\phimax^2 n k}{\epsilon^2}
    \right)
\)

\\
\bottomrule
\end{tabular}
\caption{Comparison of robust additive-error low-rank approximation guarantees.
All error guarantees hold with constant probability.}
\label{tab:robust-lra-comparison}
\end{table}

The main challenge in implementing the above approach is showing how to obtain length-squared samples from \(\bB\) with \(O(\phimax^2n)\) expected queries per sample. Since \(\bB\) may not be psd, \cref{alg:sampling-procedure} cannot be applied directly. However, we can apply a similar algorithm, after \emph{clipping} the entries of \(\bB\), following an idea of \cite{bakshi_robust_2020}, which we recall here.
\begin{definition}[Clipped matrix]
\label{def:clipped-matrix}
    Let \(\bA\in\R^{n\times n}\) be psd and \(\bB = \bA + \bN\).
    Define
    \[
    \phimax
    \coloneqq
    \max\left\{1,\, \max_{i\in[n]} \frac{\bA_{ii}}{|\bB_{ii}|}\right\}.
    \]
    For each \(i,j\in[n]\), define the \emph{clipped matrix}
\[
    \widetilde{\bB}_{ij}
    =
    \operatorname{clip}\left(
        \bB_{ij};\, \phimax\sqrt{|\bB_{ii}|\,|\bB_{jj}|}
    \right),
\]
where
\[
    \operatorname{clip}(x;\,a)
    \coloneqq
    \begin{cases}
        -a & x < -a,\\
        x & -a \leq x \leq a,\\
        a & x > a.
    \end{cases}
\]
\end{definition}

The following result, embedded in the analysis of Bakshi, Chepurko, and Woodruff~\cite{bakshi_robust_2020}, shows that clipping does not increase corruption to \(\bA\).
We provide a self-contained proof for completeness.

\begin{proposition}[Clipping decreases corruption; \cite{bakshi_robust_2020}]
    \label{lem:clipping-lemma}
    Let \(\bA\in\R^{n\times n}\) be psd, and let \(\bB=\bA+\bN\).
    Let \(\widetilde{\bB}\) be the clipped matrix of \cref{def:clipped-matrix}.
    Then $\widetilde{\bB}$ can be written as
    \[
        \widetilde{\bB} = \bA+\widetilde{\bN} \quad \text{where } \fronorm{\widetilde{\bN}} \leq \fronorm{\bN}.
    \]
\end{proposition}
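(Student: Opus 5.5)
Define \(\widetilde{\bN} \coloneqq \widetilde{\bB} - \bA\). The plan is to prove the bound entrywise: for every pair \((i,j)\) we show \(|\widetilde{\bN}_{ij}| \le |\bN_{ij}|\). Summing squares over all entries then gives \(\fronorm{\widetilde{\bN}} \le \fronorm{\bN}\). This reduces the claim to a scalar fact about clipping: if the target value \(\bA_{ij}\) lies in the clipping interval \([-a_{ij}, a_{ij}]\) with \(a_{ij} \coloneqq \phimax\sqrt{|\bB_{ii}|\,|\bB_{jj}|}\), then clipping \(\bB_{ij}\) can only move it closer to \(\bA_{ij}\).

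\textbf{Step 1: \(|\bA_{ij}| \le a_{ij}\).} By the off-diagonal inequality (\cref{fact:psd-inequality}), \(|\bA_{ij}| \le \sqrt{\bA_{ii}\bA_{jj}}\). By the definition of \(\phimax\), we have \(\bA_{ii} \le \phimax |\bB_{ii}|\) for every \(i\). Hence
\[
|\bA_{ij}| \le \sqrt{\phimax|\bB_{ii}|\cdot\phimax|\bB_{jj}|} = a_{ij}.
\]
There is a degenerate case when \(\bB_{ii} = 0\), where the ratio \(\bA_{ii}/|\bB_{ii}|\) is undefined. I would handle it by the convention that \(\phimax\) is finite only when \(\bA_{ii}=0\) whenever \(\bB_{ii}=0\), or else \(\phimax = \infty\) and clipping is vacuous. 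In either case the inequality \(\bA_{ii} \le \phimax|\bB_{ii}|\) still holds, so the bound above is unaffected. I expect this bookkeeping to be the only real subtlety in the argument.

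\textbf{Step 2: clipping is a projection.} The map \(x \mapsto \operatorname{clip}(x;a)\) is the Euclidean projection onto the convex interval \([-a,a]\). For any \(y\) in that interval, \(|\operatorname{clip}(x;a) - y| \le |x-y|\). This can be seen by cases. If \(|x| \le a\), nothing changes. If \(x > a\), then \(y \le a < x\), so \(|a - y| = a - y \le x - y\). The case \(x < -a\) is symmetric.

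\textbf{Step 3: conclude.} Apply Step 2 with \(x = \bB_{ij}\), \(y = \bA_{ij}\) and \(a = a_{ij}\), which is valid by Step 1. This gives
\[
|\widetilde{\bN}_{ij}| = |\widetilde{\bB}_{ij} - \bA_{ij}| \le |\bB_{ij} - \bA_{ij}| = |\bN_{ij}|.
\]
Squaring and summing over all \(i,j\) yields \(\fronorm{\widetilde{\bN}} \le \fronorm{\bN}\), which is the claim.
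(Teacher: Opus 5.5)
Your proof is correct and is essentially the paper's argument. Both bound \(|\bA_{ij}|\) by the clipping threshold \(\phimax\sqrt{|\bB_{ii}|\,|\bB_{jj}|}\) using the off-diagonal inequality and the definition of \(\phimax\), then observe that clipping to an interval containing \(\bA_{ij}\) cannot increase the distance to \(\bA_{ij}\), and sum squares over entries. Your case check of the projection property and your remark on \(\bB_{ii}=0\) are extra detail that the paper leaves implicit.
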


\begin{proof}
    We first note that
\(
    \bA_{ii}\leq \phimax|\bB_{ii}|
\) and \(
    \bA_{jj}\leq \phimax|\bB_{jj}|
\)
by the definition of \(\phimax\). Moreover, the off-diagonal inequality (\cref{fact:psd-inequality}) shows that \(\bA_{ij}^2\leq \bA_{ii}\bA_{jj}\), so
\[
    |\bA_{ij}|
    \leq
    \phimax\sqrt{|\bB_{ii}|\,|\bB_{jj}|}.
\]
In particular, \(\bA_{ij}\) lies in the clipping interval for entry \(ij\), so clipping at entry \(ij\) does not increase the distance of the entry to \(\bA_{ij}\). Thus,
\[
    (\widetilde{\bB}_{ij}-\bA_{ij})^2
    \leq
    \left(\bB_{ij}-\bA_{ij}\right)^2
    =
    \bN_{ij}^2.
\]
Summing over all entries gives \(\fronorm{\widetilde{\bB} - \bA}^2\leq \fronorm{\bN}^2\), which is the desired result.
\end{proof}

Although \(\widetilde{\bB}\) is not necessarily psd, the clipping threshold gives us a modified form of the off-diagonal inequality (\Cref{fact:psd-inequality}):
\[
    \widetilde{\bB}_{ij}^2
    \leq
    \phimax^2 |\bB_{ii}|\,|\bB_{jj}|\qquad\text{for all \(i,j\in[n]\)},
\]
which we use in \cref{alg:robust-sampler} to perform length-squared sampling on \(\widetilde{\bB}\).

\begin{algorithm}[t]
\caption{Fast Length-Squared Sampling for Corrupted Psd Matrix} \label{alg:robust-sampler}
\begin{algorithmic}[1]
\Require Clipped matrix $\widetilde{\bB}$ and bound $\phimax$.
\Ensure Index $I$ sampled from the length-squared distribution $\mathcal{L}(i) = \|\widetilde{\bB}_{i*}\|_2^2 / \fronorm{\widetilde{\bB}}^2$.
\State Define the diagonal sampling distribution $\mathcal{D}(i) \coloneqq |\widetilde{\bB}_{ii}| / D$, where $D \coloneqq \sum_i |\widetilde{\bB}_{ii}|$.
\While{\texttt{true}}
\State Sample independent $I, J \sim \mathcal{D}$.
\State \textbf{With probability} $\widetilde{\bB}_{IJ}^2 / (\phimax^2 |\widetilde{\bB}_{II}| |\widetilde{\bB}_{JJ}|)$, \textbf{return} $I$.
\EndWhile
\end{algorithmic}
\end{algorithm}

\begin{lemma}[Fast length-squared sampling with corruption]
\label{lem:robust-sampler}
Let $\bB = \bA + \bN$ be a corruption of a psd matrix, let $\phimax$ be as in \cref{eq:assumption-3}, and let $\widetilde{\bB}$ be the clipped matrix of \cref{def:clipped-matrix}. 
Then \cref{alg:robust-sampler} outputs \(I\) from the length-squared distribution of $\widetilde{\bB}$ and queries \(O(\phimax^2 n)\) entries of \(\bB\) in expectation.
\end{lemma}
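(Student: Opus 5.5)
Note first that the diagonal of $\widetilde{\bB}$ equals $\bB_{ii}$ in absolute value, up to sign-preserving clipping. Indeed, the clipping threshold at $(i,i)$ is $\phimax|\bB_{ii}| \ge |\bB_{ii}|$, so $\widetilde{\bB}_{ii}=\bB_{ii}$. Hence $|\widetilde{\bB}_{II}||\widetilde{\bB}_{JJ}| = |\bB_{II}||\bB_{JJ}|$, and the clipping bound gives $\widetilde{\bB}_{ij}^2 \le \phimax^2|\widetilde{\bB}_{ii}||\widetilde{\bB}_{jj}|$. So the acceptance probability in line 4 lies in $[0,1]$. If some $\bB_{ii}=0$, then $\mathcal{D}(i)=0$ and that index is never proposed. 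But the clip threshold for any entry in row $i$ is then $0$, so row $i$ of $\widetilde{\bB}$ is zero and carries no mass, which is consistent. We also need $D>0$. If all diagonals of $\bB$ vanish, then $\widetilde{\bB}=0$ and the length-squared distribution is undefined, so we assume $\widetilde{\bB}\neq 0$. Each entry $\widetilde{\bB}_{ij}$ is computed from three queries to $\bB$.

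Correctness then follows exactly as in \cref{thm:fast-l2-sampling}. In a single round, the pair $(i,j)$ is returned with probability
\[
\frac{|\widetilde{\bB}_{ii}||\widetilde{\bB}_{jj}|}{D^2}\cdot\frac{\widetilde{\bB}_{ij}^2}{\phimax^2|\widetilde{\bB}_{ii}||\widetilde{\bB}_{jj}|}=\frac{\widetilde{\bB}_{ij}^2}{\phimax^2D^2}.
\]
This is proportional to $\widetilde{\bB}_{ij}^2$. Conditioning on acceptance and summing over $j$ therefore gives $\Pr[I=i]=\|\widetilde{\bB}_{i,*}\|^2/\fronorm{\widetilde{\bB}}^2$. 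The per-round acceptance probability is $\beta=\fronorm{\widetilde{\bB}}^2/(\phimax^2D^2)$. The number of rounds is geometric with mean $1/\beta$. The expected query count is $n$ (reading the diagonal) plus $O(1/\beta)$.

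The main step is the bound $1/\beta = O(\phimax^2 n)$. By Cauchy--Schwarz on the diagonal,
\[
D^2=\Bigl(\sum_i|\widetilde{\bB}_{ii}|\Bigr)^2\le n\sum_i\widetilde{\bB}_{ii}^2\le n\fronorm{\widetilde{\bB}}^2,
\]
so $\beta\ge 1/(\phimax^2 n)$. This requires no relation to $\bA$ or the noise bound at all; \cref{lem:clipping-lemma} is not even needed. Hence the expected number of queries is $n+O(\phimax^2 n)=O(\phimax^2 n)$, using $\phimax\ge1$.

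The only genuinely delicate point is verifying that the diagonal is unchanged by clipping, together with the zero-diagonal edge case. Both are handled above, so I expect no real obstacle.
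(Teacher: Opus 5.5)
Your proposal is correct and follows the same route as the paper: compute the per-round acceptance probability $\beta = \fronorm{\Bbad}^2/(\phimax^2 D^2)$, condition on acceptance for correctness, and bound $1/\beta \le \phimax^2 n$ by Cauchy--Schwarz on the diagonal. You also observe that clipping leaves the diagonal unchanged, so $|\Bbad_{ii}| = |\bB_{ii}|$ and the acceptance probability in line 4 lies in $[0,1]$, and you handle the zero-diagonal edge case; the paper's proof leaves both of these points implicit.
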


\begin{proof}
The proof of correctness for the sampling procedure is identical to the argument in \cref{thm:fast-l2-sampling}, with the probability of accepting on any given round being
\begin{equation*}
    \beta = \sum_{i,j \in [n]} \frac{\Bbad_{ij}^2}{\phimax^2|\Bbad_{ii}||\Bbad_{jj}|} \cdot \frac{|\Bbad_{ii}|}{D}\cdot \frac{|\Bbad_{jj}|}{D} = \frac{\fronorm{\Bbad}^2}{\phimax^2 D^2}.
\end{equation*}
Here, $D \coloneqq \sum_{i \in [n]} |\Bbad_{ii}|$.
Thus, the expected number of rounds is at most
\begin{equation*}
    \frac{1}{\beta} = \frac{\phimax^2 D^2}{\fronorm{\Bbad}^2} \le \frac{\phimax^2 (\sum_{i \in [n]} |\Bbad_{ii}|)^2}{\sum_{i \in [n]} \Bbad_{ii}^2} \le \phimax^2 n. \qedhere
\end{equation*}
\end{proof}

By using the robust sampler (\cref{alg:robust-sampler}) in place of \cref{alg:sampling-procedure} in the additive-error low-rank approximation algorithm (\cref{alg:sublinear-psd-low-rank}), we obtain our algorithm for robust low-rank approximation. 

\begin{theorem}[Robust low-rank approximation]
\label{thm:robust-additive-lra}
Let \(\bA\in\R^{n\times n}\) be psd. Consider a corrupted matrix \(\bB=\bA+\bN\) satisfying assumptions \cref{eq:assumption-1,eq:assumption-3}.
Consider running \cref{alg:sublinear-psd-low-rank} on the clipped matrix $\Bbad$, using \cref{alg:robust-sampler} in place of \cref{alg:sampling-procedure} in line 2.
This procedure returns matrices \(\bX,\bY\in\R^{n\times k}\) such that, with probability at
least \(2/3\),
\[
    \fronorm{\bA-\bX\bY^\top}^2
    \leq
    \fronorm{\bA-\bA_k}^2
    +
    (\epsilon+9\sqrt{\eta})\fronorm{\bA}^2.
\]
The algorithm queries
$O\left(\frac{\phimax^2 n k}{\epsilon^2}\right)$
entries of \(\bB\) and uses $
    O\left(
        n\left(\frac{k}{\epsilon^2}\right)^{\omega-1}
        +
        \frac{\phimax^2 n k}{\epsilon^2}
    \right)$
arithmetic operations in expectation. 
\end{theorem}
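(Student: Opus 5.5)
The plan is to run the analysis of \cref{thm:sublinear-psd-low-rank} on the clipped matrix $\Bbad$ instead of $\bA$, and then transfer the guarantee back to $\bA$ via \cref{lem:clipping-lemma}. By \cref{lem:robust-sampler}, line 2 yields i.i.d.\ exact length-squared samples of $\Bbad$. Each entry of $\Bbad$ costs one query to $\bB$ (plus the two diagonal entries, which we read once up front), so \cref{thm:sublinear-psd-low-rank} applies verbatim to the nonzero matrix $\Bbad$. Note that \cref{thm:dkm06} and \cref{lem:probabilistic-right-factor} hold for general matrices, so psd-ness was only used for sampling. With probability at least $2/3$ this gives
\[
\fronorm{\Bbad-\bX\bY^\top}^2 \le \fronorm{\Bbad-\Bbad_k}^2+\epsilon\fronorm{\Bbad}^2 .
\]
The cost is $s=O(k/\epsilon^2)$ calls to \cref{alg:robust-sampler} at $O(\phimax^2 n)$ expected queries each, plus $O(n(s+t))$ queries for the full rows and columns. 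The arithmetic count is the SVD cost $O(n(k/\epsilon^2)^{\omega-1})$ plus $O(1)$ work per rejection round.

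Next, write $\Bbad=\bA+\widetilde{\bN}$ with $\fronorm{\widetilde{\bN}}\le\fronorm{\bN}\le\sqrt{\eta}\fronorm{\bA}$, which follows from \cref{lem:clipping-lemma} and \cref{eq:assumption-1}. We need three bounds.
\begin{itemize}
\item First, $\fronorm{\Bbad}\le(1+\sqrt\eta)\fronorm{\bA}$, so $\fronorm{\Bbad}^2\le 4\fronorm{\bA}^2$ since $\eta\le1$.
\item Second, by Eckart--Young optimality, $\fronorm{\Bbad-\Bbad_k}\le\fronorm{\Bbad-\bA_k}\le\fronorm{\bA-\bA_k}+\sqrt\eta\fronorm{\bA}$.
\item Third, $\fronorm{\bA-\bX\bY^\top}\le\fronorm{\Bbad-\bX\bY^\top}+\sqrt\eta\fronorm{\bA}$.
\end{itemize}

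The main obstacle is converting these norm (not squared-norm) triangle inequalities into an additive squared bound with constant $9$. I will use the fact that all quantities are at most a constant times $\fronorm{\bA}$. Let $a=\fronorm{\bA-\bA_k}\le\fronorm{\bA}$. Run the algorithm with accuracy $\epsilon/4$, so that the $\epsilon\fronorm{\Bbad}^2$ term becomes at most $\epsilon\fronorm{\bA}^2$; this changes constants only. Then
\[
\fronorm{\Bbad-\bX\bY^\top}\le\sqrt{(a+\sqrt\eta\fronorm{\bA})^2+\epsilon\fronorm{\bA}^2}=:b .
\]
Squaring $b+\sqrt\eta\fronorm{\bA}$ gives $b^2+2b\sqrt\eta\fronorm{\bA}+\eta\fronorm{\bA}^2$. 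Expanding $b^2$ gives $a^2+2a\sqrt\eta\fronorm{\bA}+\eta\fronorm{\bA}^2+\epsilon\fronorm{\bA}^2$. Now bound the pieces using $a\le\fronorm{\bA}$, $b\le\sqrt{4+1}\,\fronorm{\bA}\le 3\fronorm{\bA}$, and $\eta\le\sqrt\eta$. The total $\sqrt\eta$ coefficient is at most $2+1+6+1\le 10$. To reach $9$, I will tighten $b\le(1+\sqrt\eta+\sqrt\epsilon)\fronorm{\bA}$ and absorb carefully, or else accept a slightly different constant.
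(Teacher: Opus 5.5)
Your argument is correct and follows the paper's skeleton. You apply \cref{thm:sublinear-psd-low-rank} to $\Bbad$, with psd-ness needed only for the sampler, which \cref{lem:robust-sampler} supplies. You then transfer back to $\bA$ using \cref{lem:clipping-lemma}, Eckart--Young optimality of $\Bbad_k$, and $\fronorm{\Bbad}^2\le 4\fronorm{\bA}^2$.

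You differ from the paper only in how the squared bound is assembled. The paper stays with squared norms and bounds each cross term by the weighted AM--GM inequality $2xy\le\sqrt\eta\,x^2+y^2/\sqrt\eta$. It does this twice: once to pass from $\Bbad-\bX\bY^\top$ to $\bA-\bX\bY^\top$, and once to compare $\fronorm{\Bbad-\bA_k}^2$ with $\fronorm{\bA-\bA_k}^2$, collecting $2\sqrt\eta+7\sqrt\eta$. Because its $\epsilon\fronorm{\Bbad}^2$ term gets multiplied both by $(1+\sqrt\eta)\le 2$ and by $4$, the paper has to run at accuracy $\epsilon/8$. You instead chain unsquared triangle inequalities and square once at the end, bounding the cross terms by multiples of $\fronorm{\bA}$. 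This needs no tuned weight and only accuracy $\epsilon/4$.

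The one thing to fix is the last step, and your hedge there is unnecessary. The slack was lost only when you rounded $b\le\sqrt5\,\fronorm{\bA}$ up to $3\fronorm{\bA}$. Keeping $b\le\sqrt5\,\fronorm{\bA}<\tfrac52\fronorm{\bA}$ gives $2b\sqrt\eta\,\fronorm{\bA}\le 5\sqrt\eta\,\fronorm{\bA}^2$. The $\sqrt\eta$ coefficient is then $2+1+5+1=9$, and in fact $4+2\sqrt5\approx 8.47$. So your route proves the statement as written, with a slightly better constant. The tightening you propose, $b\le(1+\sqrt\eta+\sqrt\epsilon)\fronorm{\bA}$, would not help on its own, since it tends to $3\fronorm{\bA}$ as $\eta,\epsilon\to 1$.
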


\begin{proof}
Applying \cref{thm:sublinear-psd-low-rank} to $\widetilde{\bB}$, with probability at least 2/3, we get
\(\bX,\bY\in\R^{n\times k}\) such that
\begin{equation} \label{eq:robust-lra-1}
    \fronorm{\widetilde{\bB}-\bX\bY^\top}^2
    \leq
    \fronorm{\Bbad-\Bbad_k}^2
    +
    (\epsilon/8)\fronorm{\Bbad}^2
\end{equation}
and the stated resource estimates hold.
We have chosen the implicit constants larger to yield a prefactor of $\epsilon/8$ in \cref{eq:robust-lra-1}.

Finally, we transfer the bound of \cref{eq:robust-lra-1} from \(\widetilde{\bB}\) back to \(\bA\), using an argument embedded in the analysis of Bakshi, Chepurko, and Woodruff~\cite{bakshi_robust_2020}.
If \(\eta=0\), then \(\widetilde{\bB}=\bA\), and the desired guarantee follows
immediately.
Now assume \(\eta>0\).
We first bound
\begin{equation} \label{eq:robust-lra-2}
    \begin{split}
    \fronorm{\bA-\bX\bY^\top}^2
    &=
    \fronorm{(\widetilde{\bB}-\bX\bY^\top)-(\widetilde{\bB}-\bA)}^2 \\
    &= \fronorm{\widetilde{\bB}-\bX\bY^\top}^2 - 2 \tr[(\widetilde{\bB}-\bX\bY^\top)(\Bbad - \bA)] + \fronorm{\Bbad - \bA}^2 \\
    &\le \fronorm{\widetilde{\bB}-\bX\bY^\top}^2 + 2 \fronorm{\widetilde{\bB}-\bX\bY^\top}\fronorm{\Bbad - \bA} + \fronorm{\Bbad - \bA}^2 \\
    &\leq
    (1+\sqrt{\eta})\fronorm{\widetilde{\bB}-\bX\bY^\top}^2
    +
    \left(\frac{1}{\sqrt{\eta}}+1\right)
    \fronorm{\widetilde{\bB}-\bA}^2.
    \end{split}
\end{equation}
The first inequality is Cauchy--Schwarz for the trace inner product, and the second is the AM--GM inequality, which implies $2ab \leq \delta a^2+b^2/\delta$.
Now, we recall \(\fronorm{\widetilde{\bB}-\bA}^2\leq \fronorm{\bN}^2\leq \eta\fronorm{\bA}^2\) due to \cref{lem:clipping-lemma} and assumption \cref{eq:assumption-1}.
Thus
\[
\begin{aligned}
    \fronorm{\bA-\bX\bY^\top}^2 &\leq (1+\sqrt{\eta})
    \fronorm{\widetilde{\bB}-\bX\bY^\top}^2
    +
    (\sqrt{\eta} + \eta)\fronorm{\bA}^2\\
    &\leq
    (1+\sqrt{\eta})
    \left(
    \fronorm{\Bbad-\Bbad_k}^2
    +
    (\epsilon/8)\fronorm{\Bbad}^2
    \right)
    +
    2\sqrt{\eta}\fronorm{\bA}^2 \\
    &\leq
    (1+\sqrt{\eta})
    \left(
        \fronorm{\widetilde{\bB}-\bA_k}^2
        +
        (\epsilon/8)\fronorm{\Bbad}^2
    \right)
    +
    2\sqrt{\eta}\fronorm{\bA}^2.
\end{aligned}
\]
For the second inequality, we use \cref{eq:robust-lra-1} and the assumption that $\eta \leq 1$.
For the third inequality, we use that \(\widetilde{\bB}_k\) is the optimal rank-$k$ approximation to \(\widetilde{\bB}\).

It remains to compare \(\fronorm{\widetilde{\bB}-\bA_k}^2\) and
\(\fronorm{\bA-\bA_k}^2\).
Expanding and bounding the cross term as in \cref{eq:robust-lra-2}, we obtain
\[
\begin{aligned}
    \fronorm{\widetilde{\bB}-\bA_k}^2
    &=
    \fronorm{(\bA-\bA_k)+(\widetilde{\bB}-\bA)}^2 \\
    &\leq
    (1+\sqrt{\eta})\fronorm{\bA-\bA_k}^2
    +
    \left(1+\frac{1}{\sqrt{\eta}}\right)
    \fronorm{\widetilde{\bB}-\bA}^2 \\
    &\leq
    (1+\sqrt{\eta})\fronorm{\bA-\bA_k}^2
    +
    2\sqrt{\eta}\fronorm{\bA}^2.
\end{aligned}
\]
Since \(\fronorm{\bA-\bA_k}\leq \fronorm{\bA}\) and $\eta \le 1$, this implies
\[
    (1+\sqrt{\eta})\fronorm{\widetilde{\bB}-\bA_k}^2 \le (1+\sqrt{\eta})^2\fronorm{\bA-\bA_k}^2 + 2(1+\sqrt{\eta})\sqrt{\eta}\fronorm{\bA}^2
    \leq
    \fronorm{\bA-\bA_k}^2
    +
    7\sqrt{\eta}\fronorm{\bA}^2.
\]
Also,
\[
    \fronorm{\Bbad}^2
    \leq
    \left(\fronorm{\bA}+\fronorm{\widetilde{\bB}-\bA}\right)^2
    \leq
    (1+\sqrt{\eta})^2\fronorm{\bA}^2
    \leq
    4\fronorm{\bA}^2,
\]
where we again used that \(\eta\leq 1\). Combining our various bounds gives
\[
    \fronorm{\bA-\bX\bY^\top}^2
    \leq
    \fronorm{\bA-\bA_k}^2
    +
    (\epsilon+9\sqrt{\eta})\fronorm{\bA}^2.
\]
This completes the proof.
\end{proof}

\section{Acknowledgements}

ChatGPT 5.6 Sol and Claude Opus 5 were used interactively by the authors to help complete the proofs of \cref{thm:optimality,thm:frob_norm_optimality}, and to search for  related work and technical references.
All other proofs, the algorithms, and the writing is solely due to the authors.
CM and RB were partially supported by NSF grants AF-2427362 and AF-2046235.
ENE is supported by the Miller Institute for Basic Research in Science, University of California Berkeley.

\bibliographystyle{halpha}
{
\bibliography{references-2}
}

\begin{appendices}
\section{Additional Results and Omitted Proofs}
\subsection[Lp Sampling]{Extension to $\ell_p$ Sampling}
\label{app:lp-sampling}

For \(p\in[0,\infty)\) and a vector $\bv x \in \R^{n}$, let $\norm{\bv x}_p^p = \sum_{i \in [n]} |\bv x|^p$ be the standard vector $\ell_p$ norm. Define the \(\ell_p\) column norm distribution of a nonzero matrix \(\bA\in\R^{n\times n}\) as
\[
   \Pr[I = i] =  \frac{\|\bA_{*,i}\|_p^p}
    {\sum_{k=1}^n\|\bA_{*,k}\|_p^p}\qquad\text{for all $i\in[n]$}.
\]
When \(p=2\), this is the length-squared distribution of \eqref{eq:length-squared}. 
While less common than length-squared sampling, $\ell_p$ norm sampling is also applied in many randomized algorithms for matrix problems \cite{dasgupta2009sampling,munteanuturnstile}.
The rejection sampling procedure of \Cref{alg:sampling-procedure} extends directly to this distribution by replacing each diagonal weight \(\bA_{ii}\) with \(\bA_{ii}^{p/2}\).

\begin{algorithm}[H]
\caption{Fast $p$th-power entry and $\ell_p$ sampling} \label{alg:lp-sampling-procedure}
\begin{algorithmic}[1]
\Require Nonzero psd matrix $\bA \in \R^{n\times n}$, power $p \ge 0$.
\Ensure Index $(I,J)$ sampled from the $p$th-power entry distribution $\mathcal E_p(i,j) = |\bA_{ij}|^p / \sum_{k,\ell} |\bv A_{kl}|^p$. The indices $I$ and $J$ are sampled individually from the $\ell_p$ norm distribution $\mathcal{L}_p(i) = \norm{\bA_{*,i}}_p^p / \sum_{k=1}^n\|\bA_{*,k}\|_p^p$.
\State Read the diagonal of \(\bA\), and define the distribution
\(
    \mathcal D_p(i)={\bA_{ii}^{p/2}}/\sum_{k=1}^n \bv A_{kk}^{p/2}
\)
for all $i\in[n].$
\While{\texttt{true}}
\State Sample independent $I,J \sim \mathcal D_p$.
\State {With probability} $|\bA_{IJ}|^p / (\bA_{II}^{\vphantom{2}} \bA_{JJ}^{\vphantom{2}})^{p/2}$ \textbf{return} $I, J$.
\EndWhile
\end{algorithmic}
\end{algorithm}

As with \Cref{alg:sampling-procedure} for length-squared sampling, we show \Cref{alg:lp-sampling-procedure} runs in $O(n)$ expected time.

\begin{theorem}
\label{thm:fast-lp-sampling}
Fix \(p\ge 0\), and let \(\bA\in\R^{n\times n}\) be a nonzero psd matrix. Let $(I,J)$ denote the pair returned by \cref{alg:lp-sampling-procedure} on inputs $\bA$ and $p$. For all $i,j\in [n]$, we have
\[
    \Pr[I=i, J=j]
    =
    \frac{|\bA_{ij}|^p}
    {\sum_{k,\ell\in[n]}|\bA_{k\ell}|^p}.
\]
In turn, 
$
    \Pr[I = i]
    =
    \frac{\|\bA_{*,i}\|_p^p}
    {\sum_{k=1}^n\|\bA_{*,k}\|_p^p}.
$ and $
    \Pr[J = j]
    =
    \frac{\|\bA_{*,j}\|_p^p}
    {\sum_{k=1}^n\|\bA_{*,k}\|_p^p}.
$
The algorithm requires \(2n\) entry queries and $O(n)$ arithmetic operations in expectation.
    Moreover, with probability $1-\delta$, the algorithm terminates using at most $n + \lceil n \log(1/\delta) \rceil$ entry accesses and $O(n \log(1/\delta))$ arithmetic operations.
    \end{theorem}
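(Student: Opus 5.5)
The plan is to mirror the proof of \Cref{thm:fast-l2-sampling} with $\bA_{ii}$ replaced by $\bA_{ii}^{p/2}$. Write $T_p \coloneqq \sum_{k} \bA_{kk}^{p/2}$; this is positive because $\bA$ is nonzero psd and so has a positive diagonal entry. (For $p=0$ we use the convention $0^0=0$ consistently on both the diagonal and the entries, so that $\mathcal D_0$ is uniform on the support of the diagonal.)

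The first step is validity. The acceptance probability $|\bA_{IJ}|^p/(\bA_{II}\bA_{JJ})^{p/2}$ lies in $[0,1]$ because raising \cref{fact:psd-inequality} to the power $p/2\ge 0$ gives $|\bA_{ij}|^p \le (\bA_{ii}\bA_{jj})^{p/2}$. This step also shows that $\bA_{ij}=0$ whenever $\bA_{ii}\bA_{jj}=0$, so pairs that are never proposed carry no mass in the target distribution.

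Next, fix $(i,j)$. In a single round the pair $(i,j)$ is proposed and accepted with probability
\[
\frac{\bA_{ii}^{p/2}\bA_{jj}^{p/2}}{T_p^2}\cdot\frac{|\bA_{ij}|^p}{(\bA_{ii}\bA_{jj})^{p/2}}=\frac{|\bA_{ij}|^p}{T_p^2}.
\]
Summing over all pairs, the per-round acceptance probability is $\beta_p = \sum_{k,\ell}|\bA_{k\ell}|^p/T_p^2$. Conditioning on acceptance then gives the $p$th-power entry distribution $\mathcal E_p$. The marginals of $I$ and $J$ follow by summing over the other coordinate, and symmetry of $\bA$ identifies row sums with column sums, so both marginals are the $\ell_p$ column-norm distribution.

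The key step is the bound $\beta_p \ge 1/n$. Drop the off-diagonal terms to get $\sum_{k,\ell}|\bA_{k\ell}|^p \ge \sum_k (\bA_{kk}^{p/2})^2$, and apply Cauchy--Schwarz to the vector $(\bA_{kk}^{p/2})_k$ to get $\sum_k (\bA_{kk}^{p/2})^2 \ge T_p^2/n$. Together these give $\beta_p\ge 1/n$, and this holds uniformly in $p$. The only subtlety I expect is the $p=0$ convention above, which must be applied consistently so that the diagonal lower bound still holds.

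The resource bounds then follow exactly as in \Cref{thm:fast-l2-sampling}. The number of rounds is geometric with mean $1/\beta_p\le n$, so the expected number of entry queries is $n+n=2n$. The failure probability after $R$ rounds is at most $(1-\beta_p)^R\le \mathrm e^{-R/n}$, so taking $R=\lceil n\log(1/\delta)\rceil$ gives the high-probability bound. Each round uses $O(1)$ arithmetic operations, counting one power evaluation as $O(1)$.
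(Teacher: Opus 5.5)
Your proposal is correct and matches the paper's proof step for step. Both arguments use \cref{fact:psd-inequality} raised to the power $p/2$ for validity, the per-round probability $|\bA_{ij}|^p/T_p^2$ followed by conditioning and symmetry for the marginals, and $\beta_p \ge 1/n$ by dropping the off-diagonal terms and applying Cauchy--Schwarz to $(\bA_{kk}^{p/2})_k$. Your remarks on zero diagonal entries and on a consistent $0^0$ convention at $p=0$ are extra care that the paper leaves implicit.
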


\begin{proof}
The proof is analogous to that of \Cref{thm:fast-l2-sampling}. Since $\bv A$ is nonzero psd, with nonnegative diagonal entries, the distribution $\mathcal D_p$ computed in line 1 of \Cref{alg:lp-sampling-procedure} is valid. Additionally, by the psd off-diagonal inequality (\Cref{fact:psd-inequality}), $
    |\bA_{ij}|^p
    \leq
    (\bA_{ii}\bA_{jj})^{p/2},
$
so the acceptance probability on line 4 is also well-defined. 

Let $Z_p = \sum_{k=1}^n \bv A_{kk}^{p/2}$. Fix a pair \((i,j)\).
    In one round, the probability that the algorithm returns \((i,j)\) is
\[
    \frac{\bA_{ii}^{p/2}}{Z_p}
    \cdot
    \frac{\bA_{jj}^{p/2}}{Z_p}
    \cdot
    \frac{|\bA_{ij}|^p}{(\bA_{ii}\bA_{jj})^{p/2}}
    =
    \frac{|\bA_{ij}|^p}{Z_p^2}.
\]
Thus, for every pair \((i,j)\), the probability of outputting that pair in one round is \(|\bA_{ij}|^p/Z_p^2\). Summing over all pairs, the probability of acceptance in one round is
\[
    \beta_p
    =
    \frac{\sum_{k,\ell\in[n]}|\bA_{k\ell}|^p}{Z_p^2}.
\]
Conditioning on acceptance gives $
    \Pr[I=i, J=j]
    =
    \frac{|\bA_{ij}|^p}
    {\sum_{k,\ell\in[n]}|\bA_{k\ell}|^p},$
as desired.
Summing over the second coordinate and using symmetry gives the claimed expressions for $\Pr[I=i]$ and $\Pr[J=j]$. Finally,
\begin{align*}
    \beta_p
    =
    \frac{\sum_{k,\ell\in[n]}|\bA_{k,\ell}|^p}
    {\left(\sum_{k=1}^n\bA_{kk}^{p/2}\right)^2} \geq
    \frac{\sum_{k=1}^n\bA_{kk}^p}
    {\left(\sum_{k=1}^n\bA_{kk}^{p/2}\right)^2} \geq
    \frac{1}{n},
\end{align*}
where the last inequality follows from Cauchy--Schwarz. Hence, as in \Cref{thm:fast-l2-sampling}, the number of rounds executed is a geometric random variable with mean $1/\beta_p \le n$. The stated query complexity and runtime estimates follow, exactly as in the proof of \Cref{thm:fast-l2-sampling}.
\end{proof}

\subsection{Proof of \cref{lem:probabilistic-right-factor}}
\label{app:probabilistic-right-factor-proof}

\begin{proof}
Define the i.i.d.\ random matrices
\[
    \bR_r=\frac{\bX_{I_r,*}^{\top}\bA_{I_r,*}}{q_{I_r}}.
\]
A simple calculation shows that $\E[\bR_r] = \bX^\top \bA$ for $r \in [t]$.

Expanding the expectation, we have
\begin{equation*}
    \E\fronorm{\bY^\top  -\bX^\top\bA}^2
    =
    \E\fronorm*{
        \frac{1}{t}\sum_{r=1}^t
        \left(\bR_r-\bX^\top\bA\right)
    }^2 =
    \frac{1}{t^2}
    \sum_{r=1}^t\sum_{s=1}^t
    \E\operatorname{tr}\left[
        \left(\bR_r-\bX^\top\bA\right)^\top
        \left(\bR_s-\bX^\top\bA\right)
    \right].
\end{equation*}
The cross terms $r\ne s$ vanish by independence and the identity $\E[\bR_r] = \bX^\top \bA$.
Therefore,

\begin{equation} \label{eq:Z-err-1}
    \E\fronorm{\bY^\top  -\bX^\top\bA}^2
    =
    \frac{1}{t^2}
    \sum_{r=1}^t
    \E\fronorm{\bR_r-\bX^\top\bA}^2 =
    \frac{1}{t}
    \E\fronorm{\bR_1-\bX^\top\bA}^2 \leq
    \frac{1}{t}\E\fronorm{\bR_1}^2.
\end{equation}
We may bound $\E\fronorm{\bR_1}^2$ as follows
\begin{equation} \label{eq:Z-err-2}
    \E\fronorm{\bR_1}^2
    =
    \sum_{i:q_i>0}
    q_i
    \fronorm*{
        \frac{\bX_{i,*}^{\top}\bA_{i,*}}{q_i}
    }^2  =
    \sum_{i:q_i>0}
    \frac{\fronorm{\bX_{i,*}^{\top}\bA_{i,*}}^2}{\norm{\bX_{i,*}}^2/k} =k\sum_{i:q_i>0}\norm{\bA_{i,*}}^2\leq k\fronorm{\bA}^2 .
\end{equation}
Combining \cref{eq:Z-err-1}, \cref{eq:Z-err-2}, and Markov's inequality, we conclude
\begin{equation} \label{eq:Z-err-3}
    \fronorm{\bY^\top  -\bX^\top\bA}^2
    \leq
    \frac{k}{\delta t}\fronorm{\bA}^2\qquad\text{with probability at least \(1-\delta\).}
\end{equation}
Finally, make the orthogonal decomposition
\[
    \bA-\bX\bY^\top
    =
    (\bI-\bX\bX^\top)\bA+\bX(\bX^\top\bA-\bY^\top).
\]
By orthogonality, the squared Frobenius error decomposes
\[
    \fronorm{\bA-\bX\bY^\top}^2
    =
    \fronorm{\bA-\bX\bX^\top\bA}^2
    +
    \fronorm{\bX(\bY^\top  -\bX^\top\bA)}^2 = \fronorm{\bA-\bX\bX^\top\bA}^2
    +
    \fronorm{\bY^\top  -\bX^\top\bA}^2.
\]
In the final identity, we use that $\bX$ has orthonormal columns.
Instantiating \cref{eq:Z-err-3} completes the proof. 
\end{proof}

\subsection{Proof of \cref{thm:sublinear-psd-eigenvalue}}
\label{sec:eigenvalue-proof}

\begin{algorithm}[ht]
\caption{Sublinear PSD Eigenvalue and Eigenvector Approximation} \label{alg:sublinear-psd-eigenvalue}
\begin{algorithmic}[1]
\Require Nonzero psd matrix $\bA \in \R^{n\times n}$, accuracy parameter $\epsilon \in (0,1)$
\Ensure Estimates $\tilde\lambda_1 \geq \tilde\lambda_2 \geq \cdots \geq \tilde\lambda_n \geq 0$ for the eigenvalues of $\bA$
\State Let $C>0$ be a sufficiently large absolute constant, and set
\(
    m = \left\lceil \frac{C}{\epsilon^2} \log^4 n \log^2 \frac{1}{\epsilon} \right\rceil
\)
\State Run \cref{alg:frobenius-norm-approximation} with accuracy parameter $\epsilon/4$, producing an estimate $\widehat{F}$ for $\fronorm{\bA}^2$
\State\label{step0} Sample $K \sim \Poisson(m)$
\State \label{step1} Run \cref{alg:sampling-procedure} $K$ times on $\bA$, producing the ordered list of indices $S=\{i_1,\ldots,i_K\}$ ($i_p \in [n]$ for all $p \in [K]$) sampled by squared row norms.
\State Query the full rows $\bA_{i_1,*},\ldots,\bA_{i_K,*}$ and, for each $q\in[K]$, set
\(
    \tilde{p}_{i_q} \coloneqq {\|\bA_{i_q,*}\|_2^2}/{\widehat{F}}
\)

\State \label{step:zero1} Form the matrix $\bv A'_{S,S} \in \mathbb{R}^{K \times K}$ such that for all $(p,q) \in [K] \times [K]$ and for a sufficiently large absolute constant $c>0$:
\[
\left(\bv A'_{S,S} \right)_{pq} = \begin{cases}
0 & \text{if } p = q,\\[4pt]
0 & \text{if } p \neq q \text{ and }
\|\bv A_{i_{p},*}\|_2^2 \|\bv A_{i_{q},*}\|_2^2
\leq \dfrac{\epsilon^2 \tilde{F} |\bv A_{i_{p}i_{q}}|^2}{c\log^4 n},\\[4pt]
\frac{\bv A_{i_{p}i_{q}}}{K\sqrt{\tilde{p}_{i_p} \tilde{p}_{i_q} }} & \text{otherwise, }
\end{cases}.
\]
\State \Return eigenvalues $\widehat{\lambda}_1\geq \widehat{\lambda}_2\geq\cdots\geq \widehat{\lambda}_K$ of $\bv A'_{S,S}$ together with $n-K$ additional zeros.
\end{algorithmic}
\end{algorithm}

We first introduce the following truncation operator.
This is a simple modification of the truncation procedure in \cite[Def.~7.1]{swartworth_tight_2024}, except that we introduce a new parameter $F$ in order to accommodate an \emph{estimate} for \(\fronorm{\bA}^2\). 

\begin{definition}[Truncation operator]
\label{def:eigenvalue-zeroed-matrix}
Let \(\bA\in\R^{n\times n}\) be symmetric, let
\(\epsilon\in(0,1)\), and let \(F>0\) be an estimated value for $\fronorm{\bA}^2$. The truncated matrix
\(
    \mathsf Z_{\epsilon,F}(\bA)\in\R^{n\times n}
\)
is defined entrywise as
\[
    \bigl(\mathsf Z_{\epsilon,F}(\bA)\bigr)_{ij}
    \coloneqq
    {
    \begin{cases}
    0,
        & i=j \text{ and } \|\bA_{i, *}\|^2_2\leq \frac{\epsilon^2}{4} F,\\
    0,
        & i\neq j
        \text{ and }
        \|\bA_{i,*}\|_2^2\|\bA_{j,*}\|_2^2
        \leq
        \displaystyle
        \frac{\epsilon^2 F|\bA_{ij}|^2}
        {c\log^4 n},\\
    \bA_{ij},
        & \text{otherwise},
    \end{cases}
    }
\]
where \(c>0\) is a fixed absolute constant.
When
\(F=\fronorm{\bA}^2\), we abbreviate
\(
    \mathsf Z_{\epsilon}(\bA)
    \coloneqq
    \mathsf Z_{\epsilon,\fronorm{\bA}^2}(\bA).
\)
\end{definition}

Following~\cite{swartworth_tight_2024}, we assume that
$p_i:=\frac{m\|\bA_{i, *}\|^2_2}{\|\bA\|_F^2}\leq 1$ for all $i \in [n]$, so
that the $p_i$ are valid sampling probabilities. Since
\cref{alg:sampling-procedure} draws rows \emph{exactly} from the
squared row-norm distribution, these are the inclusion probabilities of our
sampler even though the algorithm never learns $\|\bA\|_F^2$; the approximation
$F$ enters only in the rescaling of the sampled rows, which uses
$\tilde p_i := \frac{m\|\bA_{i, *}\|^2_2}{F}$ in place of $p_i$, and in the
truncation operator $\mathsf Z_{\epsilon,F}$. We now show that this still
recovers the spectrum of $\bA$ up to additive $\epsilon\|\bA\|_F$ error,
provided $F$ is a $(1\pm\frac\epsilon4)$ \emph{relative} approximation to
$\|\bA\|_F^2$.

\begin{lemma}
\label{lem:sw-row-norm-eigenvalue}
Let \(\bA\in\R^{n\times n}\) be a nonzero symmetric matrix, let
\({\epsilon\in(0,1)}\), and
let $F\in\R$ be
such that $(1-\frac{\epsilon}{4})\|\bA\|_F^2 \leq F \leq (1+\frac{\epsilon}{4})\|\bA\|_F^2$.
Let $m \geq \frac{c}{\epsilon^2}\log^4 n\log^2\frac{1}{\epsilon}$ for some
sufficiently large constant $c>0$, and suppose that
$p_i:=\frac{m\|\bA_{i, *}\|^2_2}{\|\bA\|_F^2} \leq 1$ for all $i \in [n]$.
Let $\bS \in \R^{k \times n}$ be the scaled sampling matrix which samples each
row $i \in [n]$ of $\bA$ independently with probability $p_i$ and
rescales the
sampled row by $\frac{1}{\sqrt{\tilde p_i}}$
where  $\tilde p_i=\frac{m\|\bA_{i, *}\|^2_2}{F}$. Let
$\mathsf Z_{\epsilon,F}(\bA)$ be the zeroed out matrix as in
Definition~\ref{def:eigenvalue-zeroed-matrix}.
Then, with probability at least $\frac{5}{6}$, the eigenvalues of
\(
    \bS\mathsf Z_{\epsilon,F}(\bA)\bS^\top
\)
together with \(n-k\) additional zeros form an additive
\(\epsilon\|\bA\|_F\)-approximation to the
spectrum of \(\bA\).
\end{lemma}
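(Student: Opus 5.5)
We reduce to the exact-normalization result of \cite{swartworth_tight_2024} (their analysis of length-squared sampling with the truncation $\mathsf Z_\epsilon$), treating the estimate $F$ as a perturbation. Write $\gamma \coloneqq F/\fronorm{\bA}^2 \in [1-\frac{\epsilon}{4}, 1+\frac{\epsilon}{4}]$. Then $\tilde p_i = p_i/\gamma$ exactly, so the scaled sampling matrix with rescaling $1/\sqrt{\tilde p_i}$ equals $\sqrt{\gamma}\,\bS_0$. Here $\bS_0$ is the scaled sampling matrix that uses the true probabilities $p_i$, both to include rows and to rescale them. Hence
\[
    \bS\,\mathsf Z_{\epsilon,F}(\bA)\,\bS^\top = \gamma\, \bS_0\, \mathsf Z_{\epsilon,F}(\bA)\, \bS_0^\top .
\]
The multiplicative factor $\gamma$ changes every eigenvalue by at most $\frac{\epsilon}{4}|\lambda|$. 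It therefore suffices to show two things. First, the eigenvalues of $\bS_0 \mathsf Z_{\epsilon,F}(\bA)\bS_0^\top$, padded with zeros, are an additive $\frac{\epsilon}{2}\fronorm{\bA}$-approximation to the spectrum. Second, every such eigenvalue is $O(\fronorm{\bA})$. The second claim follows from the first, since $|\lambda_i(\bA)|\le\fronorm{\bA}$. The total error is then $\frac{\epsilon}{2}\fronorm{\bA} + \frac{\epsilon}{4}(1+\frac{\epsilon}{2})\fronorm{\bA} \le \epsilon\fronorm{\bA}$, using Weyl's inequality.

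The remaining issue is that the truncation uses $F$ instead of $\fronorm{\bA}^2$. Inspecting the thresholds, $F$ enters only through the cutoffs $\frac{\epsilon^2}{4}F$ on the diagonal and $\frac{\epsilon^2 F}{c\log^4 n}|\bA_{ij}|^2$ off the diagonal. Replacing $F$ by $\gamma\fronorm{\bA}^2$ only rescales the constant $c$ by a factor in $[\frac{3}{4},\frac{5}{4}]$, and likewise the diagonal constant $\frac14$. The plan is to re-run the proof of \cite[Thm.~7.x]{swartworth_tight_2024} (their row-norm-sampling theorem) with these slightly perturbed constants. Their argument has three steps. First, a deterministic bound shows $\|\bA - \mathsf Z(\bA)\|_2 \le \frac{\epsilon}{4}\fronorm{\bA}$ for the zeroed-out entries. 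This uses only threshold inequalities of the displayed form, so it holds with the constant adjusted by a factor of at most $4/3$. Second, a matrix concentration / interlacing argument for $\bS_0\mathsf Z(\bA)\bS_0^\top$ bounds the variance by $\frac{\fronorm{\bA}^2}{m}$ times entry-size caps coming from the truncation thresholds. These caps again degrade only by constant factors. Third, the result is combined via Weyl's inequality. Because everything is robust to constant changes in the threshold, taking the constant $c$ in $m$ large enough absorbs the perturbation. Running their argument at accuracy $\epsilon/2$ yields the $\frac{\epsilon}{2}\fronorm{\bA}$ bound with probability at least $5/6$.

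The main obstacle is the second step. We need to verify that their variance and norm bounds for the sampled, truncated matrix depend on $F$ only monotonically through the thresholds, and not through some exact identity such as $\sum_i p_i = m$. If an identity of that kind appears, say as $\E\,\bS_0^\top\bS_0 = \bI$, we are still safe: that identity involves only $p_i$, which we kept exact by the factoring above. This factoring is the key reason to pull $\gamma$ out before invoking their analysis. If their proof is not modular enough to accept perturbed thresholds, the fallback is a sandwich argument. Entries zeroed by $\mathsf Z_{\epsilon,F}$ but not by $\mathsf Z_{\epsilon}$ (or vice versa) lie in a band where the threshold ratio is within a factor $1\pm\frac{\epsilon}{4}$. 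We would bound the spectral norm of the difference $\bS_0(\mathsf Z_{\epsilon,F}(\bA)-\mathsf Z_{\epsilon}(\bA))\bS_0^\top$ directly, using the same estimates their proof applies to the truncated-away part.
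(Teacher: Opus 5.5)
\textbf{The gap is in the truncation step.} Your treatment of the sampling matrix matches the paper's. With $\gamma = F/\fronorm{\bA}^2$ you get $\bS=\sqrt{\gamma}\,\bS_0$, and your final scalar-rescaling step is essentially the paper's as well. The problem is that you leave the truncation step, which is the heart of the matter, as a plan to re-run the proof of \cite{swartworth_tight_2024} with perturbed constants. That plan rests on guesses about the proof's internals, such as a deterministic bound $\|\bA-\mathsf Z(\bA)\|_2\le\frac{\epsilon}{4}\fronorm{\bA}$. You also flag the variance step as an unresolved obstacle, and the sandwich-argument fallback is not worked out. As written, nothing establishes that $\bS_0\,\mathsf Z_{\epsilon,F}(\bA)\,\bS_0^\top$ has the spectral accuracy you need.

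\textbf{The missing observation.} In both zeroing conditions of Definition~\ref{def:eigenvalue-zeroed-matrix}, $\epsilon$ and $F$ enter only through the product $\epsilon^2F$. Since $\epsilon^2F=(\epsilon')^2\fronorm{\bA}^2$ for $\epsilon'\coloneqq\epsilon\sqrt{\gamma}$, we have $\mathsf Z_{\epsilon,F}(\bA)=\mathsf Z_{\epsilon'}(\bA)$ \emph{identically}. Your ``perturbed constants'' $\gamma/4$ and $c/\gamma$ are not independent perturbations; they are exactly this rescaling of the accuracy parameter.

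\textbf{How this closes the proof.} Because $\epsilon'\in[\sqrt{3/4}\,\epsilon,\sqrt{5/4}\,\epsilon]$, the given $m$ still meets the requirement at accuracy $\epsilon'$ up to constants. (The paper restricts to $\epsilon\le\frac45$ to control the $\log(1/\epsilon)$ factor and keep $\epsilon'<1$.) Lemma~7.7 of \cite{swartworth_tight_2024} then applies verbatim as a black box and gives $|\nu_i-\lambda_i(\bA)|\le\epsilon'\fronorm{\bA}$. Your rescaling step yields $|\gamma\nu_i-\lambda_i(\bA)|\le\gamma^{3/2}\epsilon\fronorm{\bA}+\frac{\epsilon}{4}\fronorm{\bA}$. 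This is at most $\frac{19}{12}\epsilon\fronorm{\bA}$ once $\epsilon\le\frac45$, and the constant is absorbed by rescaling $\epsilon$, which is how the paper closes.

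\textbf{Your $\frac{\epsilon}{2}$ intermediate target.} A black-box call cannot deliver it. The truncation fixed in the statement pins the accuracy at $\epsilon\sqrt{\gamma}$. To read $\mathsf Z_{\epsilon,F}$ as an accuracy-$\frac{\epsilon}{2}$ truncation, you would need diagonal constant $\gamma$ in place of $\frac14$ and off-diagonal constant $c/(4\gamma)$. That is a factor-$4\gamma$ change in the thresholds, not the mild $[\frac34,\frac54]$ perturbation you describe, and it is again an unverified claim about the internals of their proof.
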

\begin{proof}
Observe that when $F=\|\bA\|_F^2$, the Lemma is exactly
Lemma 7.7 of~\cite{swartworth_tight_2024}. We now show that when $F$ is only a
$(1\pm\frac\epsilon4)$ relative approximation to $\|\bA\|_F^2$, the procedure in
the statement is, after an explicit rescaling of the accuracy parameter and of
the sampling matrix, precisely the procedure analyzed there, so that the same
argument applies. We will prove the lemma for
$\epsilon\leq\frac45$, which is sufficient for our application. Note that 
any algorithm may always replace $\epsilon$
by $\min\{\epsilon,\frac45\}$, which only strengthens the guarantee and changes
$m$ by a constant factor.
 Let
\[
    \theta
    :=
    \frac{\sqrt F}{\|\bA\|_F},
    \qquad\text{so that}\qquad
    \theta^2
    =
    \frac{F}{\|\bA\|_F^2}
    \in\left[1-\frac\epsilon4,1+\frac\epsilon4\right]
    \subseteq\left[\tfrac45,\tfrac65\right],
\]
where we used $\epsilon\leq\frac45$. Let $\bS_{\mathrm{ex}}\in\R^{k\times n}$
denote the sampling matrix which includes each row $i$ independently with
probability $p_i$ and rescales it by $1/\sqrt{p_i}$, i.e.\ the sampling matrix
used in Lemma 7.7 of~\cite{swartworth_tight_2024}, and set
\[
    \epsilon'
    :=
    \epsilon\theta
    =
    \frac{\epsilon\sqrt F}{\|\bA\|_F}.
\]
Since $\epsilon\leq\frac45$ and $\theta\leq\sqrt{6/5}$, we have
$0<\epsilon'\leq\frac45\sqrt{6/5}<1$, so $\epsilon'$ is an admissible accuracy
parameter. 
We now compare the two procedures and then account for the
resulting distortion. 

\medskip
\noindent\emph{The truncated matrices coincide,
$\mathsf Z_{\epsilon,F}(\bA)=\mathsf Z_{\epsilon'}(\bA)$.}
In Definition~\ref{def:eigenvalue-zeroed-matrix}, the accuracy parameter and $F$
enter both zeroing conditions only through the product (accuracy)$^2\cdot F$,
and by construction $
    \epsilon^2 F
    =
    \epsilon^2\theta^2\|\bA\|_F^2
    =
    (\epsilon')^2\|\bA\|_F^2$. Thus, the two truncation operators $\mathsf Z_{\epsilon,F}$ 
and $\mathsf Z_{\epsilon',\|\bA\|_F^2}$ are identical, 
and the latter is precisely the operator $\mathsf Z_{\epsilon'}$ used 
in Lemma 7.7 of~\cite{swartworth_tight_2024}.

\medskip
\noindent\emph{The sampling matrices differ by a scalar.}
The two rescalings differ by the ratio
$\sqrt{p_i/\tilde p_i}=\sqrt F/\|\bA\|_F=\theta$, which is the \emph{same} for
every $i\in[n]$ since the column norms cancel. Hence $\bS=\theta\,\bS_{\mathrm{ex}}$,
and combining with Step 1,
\begin{equation}
\label{eq:sw-scaled-submatrix}
    \bS\mathsf Z_{\epsilon,F}(\bA)\bS^\top
    =
    \theta^2\,
    \bS_{\mathrm{ex}}\mathsf Z_{\epsilon'}(\bA)\bS_{\mathrm{ex}}^\top .
\end{equation}

\medskip
Note that since $\epsilon$ and $\epsilon'$ differ by at most a constant 
factor, the sample complexity $m=\Omega(\frac{\log^4 n}{\epsilon'^2}\log^2 \frac{1}{\epsilon'})
=\Omega(\frac{\log^4 n}{\epsilon^2}\log^2 \frac{1}{\epsilon})$.
By the two steps above, the hypotheses of Lemma 7.7 of~\cite{swartworth_tight_2024} hold
for $\bA$ with accuracy parameter $\epsilon'\in(0,1)$ and sample complexity
$m$. Let $\nu_1\geq\cdots\geq\nu_n$ denote the eigenvalues of
$\bS_{\mathrm{ex}}\mathsf Z_{\epsilon'}(\bA)\bS_{\mathrm{ex}}^\top$ together with
$n-k$ additional zeros. Then, with probability at least $\frac56$,
\begin{equation}
\label{eq:sw-exact-guarantee}
    |\nu_i-\lambda_i(\bA)|
    \leq
    \epsilon'\|\bA\|_F
    =
    \epsilon\theta\|\bA\|_F
    \qquad\text{for all }i\in[n].
\end{equation}
By~\eqref{eq:sw-scaled-submatrix} and $\theta^2>0$, the eigenvalues of
$\bS\mathsf Z_{\epsilon,F}(\bA)\bS^\top$ together with $n-k$ additional zeros
are exactly $\tilde\lambda_i=\theta^2\nu_i$, in the same order. Since
$\sum_{i}\lambda_i(\bA)^2=\|\bA\|_F^2$, we have $|\lambda_i(\bA)|\leq\|\bA\|_F$
for every $i$, so on the event~\eqref{eq:sw-exact-guarantee} the triangle
inequality gives, for every $i\in[n]$,
\[
    \left|\tilde\lambda_i-\lambda_i(\bA)\right|
    \leq
    \theta^2\left|\nu_i-\lambda_i(\bA)\right|
    +
    \left|\theta^2-1\right|\left|\lambda_i(\bA)\right|
    \leq
    \theta^3\epsilon\|\bA\|_F
    +
    \left|\theta^2-1\right|\|\bA\|_F .
\]
The first term is the sampling error, inflated by the rescaling; the second is
the systematic multiplicative distortion caused by using $F$ in place of
$\|\bA\|_F^2$. For the first term, $\theta^2\leq\frac65$ and
$\left(\frac65\right)^3=\frac{216}{125}<\frac{16}{9}=\left(\frac43\right)^2$
give $\theta^3\leq\frac43$; for the second,
$\theta^2\in[1-\frac\epsilon4,1+\frac\epsilon4]$ gives
$|\theta^2-1|\leq\frac\epsilon4$. Hence, for every $i\in[n]$,
\[
    \left|\tilde\lambda_i-\lambda_i(\bA)\right|
    \leq
    \left(\frac43+\frac14\right)\epsilon\|\bA\|_F
    =
    \frac{19}{12}\epsilon\|\bA\|_F
    \leq
    \frac53\epsilon\|\bA\|_F .
\]
Adjusting $\epsilon$ by a constant factor, we obtain the final bound
as stated.
\end{proof}

Our main result 
Theorem~\ref{thm:sublinear-psd-eigenvalue} now follows from the same \emph{Poissonization} argument as in 
the proof of Theorem 5 in~\cite{RB26}.
We briefly recall the 
high level ideas here and refer the reader to~\cite{RB26} for the details. 
There are two main steps in the proof.
First, following~\cite{swartworth_tight_2024} and~\cite{RB26}, 
we consider an 
\emph{inflated} matrix $\bv A_M$, where each column and row 
of $\bv A$ is duplicated $M$ times and 
scaled by $1/\sqrt{M}$ to ensure all 
sampling probabilities are at most $1$.
Note that the nonzero spectrum of $\bv A_M$ is the same as that of $\bv A$ and so, $\|\bv A_M \|_F=\|\bv A \|_F$.
Thus, we can apply
\cref{lem:sw-row-norm-eigenvalue} to $\bv A_M$
and obtain additive approximations to the eigenvalues of $\bv A$.
Also, observe that sampling from the inflated matrix $\bv A_M$ is
equivalent to sampling each row $i \in [n]$ from $\bv A$
$\Binomial(M, \frac{m\pi_i}{M})$ times where
$\pi_i=\frac{\|\bv A_{i, *}\|_2^2}{\|\bv A\|_F^2}$. Thus, the error guarantee holds using this Binomial sampling scheme. Note that, since $\bv A_M$ has $nM$ rows, using \cref{lem:sw-row-norm-eigenvalue}, the numbers of rows sampled $m$ should be $ m = \frac{c\log^4 nM}{\epsilon^2}\log^2 \frac{1}{\epsilon}$. As we explain below, we will set $M=O(m^2)$, so, $m=O \left(\frac{\log^4 n}{\epsilon^2}\log^2 \frac{1}{\epsilon} \right)$ after adjusting the constants.

Second, we transfer this error guarantee to actual
the Poisson sampling scheme used in 
\cref{alg:sublinear-psd-eigenvalue} by 
bounding the total variation distance between 
the Binomial and Poisson sampling distributions.
To do this, we use Lemma 17 of~\cite{RB26} which bounds the total variation distance between the two distributions. Specifically, we can show that the sampling scheme in steps 3 and 4 of~\cref{alg:sublinear-psd-eigenvalue} is the same as sampling row $i$ $\mathrm{Poisson}(m\pi_i)$
times independently of other rows.
Thus, if $n_i \sim \mathrm{Poisson}(m\pi_i)$, 
$(n_1, \ldots, n_n)$ are \emph{mutually independent} (Lemma 18 of~\cite{RB26}).
By Lemma 17 of~\cite{RB26},
the total variation distance between the two joint distributions of $\mathrm{Poisson}(m\pi_i)$ ($\forall i \in [n]$)
and $\mathrm{Binomial}(M, m\pi_i/M)$ ($\forall i \in [n]$)
is at most $m^2/M$,
which is bounded by some small constant $\delta \in (0,1)$ for $M \geq \frac{m^2}{\delta}$. We refer the reader to~\cite{RB26} for the details.

\paragraph{Proof of \cref{thm:sublinear-psd-eigenvalue}}
We can prove the error bound using the two step reduction to the sampling scheme of \cref{alg:sublinear-psd-eigenvalue} from the sampling scheme of \cref{lem:sw-row-norm-eigenvalue} as described above. We bound the query and arithmetic complexity below.

\medskip
\noindent\textbf{Query and arithmetic complexity.}
Here $K\sim\Poisson(m)$, so $\E[K]=m$, $\E[K^2]=O(m^2)$ and $\E[K^3]=O(m^3)$.
Step 4 of \cref{alg:sublinear-psd-eigenvalue} costs $O(n/\epsilon^2)$ queries and arithmetic operations by
Theorem~\ref{thm:frobenius-norm-approximation-algorithm}; each of the $K$ calls
to \cref{alg:sampling-procedure} costs $O(n)$ queries and arithmetic
operations in expectation by Theorem~\ref{thm:fast-l2-sampling}, so their total expected cost is $O(nm)$; reading the sampled rows in step~5 costs $Kn$ queries and $O(Kn)$
arithmetic operations; forming $\bv A'_{S,S}$ costs $O(K^2)$ and its
eigendecomposition $O(K^{\omega})$ arithmetic operations. In expectation this is
\[
    O\!\left(\frac n{\epsilon^2}+nm\right)
    =O\!\left(\frac n{\epsilon^2}\log^4 n\log^2\frac1\epsilon\right)
\]
queries and
\[
    O\!\left(nm+m^{\omega}\right)
    =O\!\left(\frac n{\epsilon^2}\log^4 n\log^2\frac1\epsilon
    +\frac1{\epsilon^{2\omega}}\log^{4\omega}n\log^{2\omega}\frac1\epsilon\right)
\]
arithmetic operations.
\qed

\end{appendices}

\end{document}